\theoremstyle{definition}
\newtheorem*{remark}{Remark}
\newcommand{\E}{\mathbb{E}}
\newcommand{\R}{\mathbb{R}}
\renewcommand{\P}{\mathbb{P}}
\renewcommand\footnotetextcopyrightpermission[1]{} 
\begin{document}

\title{On the Power-of-d-choices with Least Loaded Server Selection}

\author{T. Hellemans and B. Van Houdt}

\affiliation{Dept. Mathematics and Computer Science\\
University of Antwerp, Belgium}
\email{{tim.hellemans,benny.vanhoudt}@uantwerpen.be}

\begin{abstract}
Motivated by distributed schedulers that combine the power-of-d-choices with late binding and systems that use replication with cancellation-on-start, we study the performance of the LL(d) policy which assigns a job to a server that currently has the least workload among d randomly selected servers in large-scale 
homogeneous clusters.

We consider general service time distributions and propose a partial integro-differential equation to describe the evolution of the system. This equation relies on the earlier proven ansatz for LL(d) which asserts that the workload distribution of any finite set of queues becomes independent of one another
as the number of servers tends to infinity. Based on this equation we propose a fixed point iteration for the limiting workload distribution and
study its convergence.

For exponential job sizes we present a simple closed form expression for the limiting workload distribution that is valid for any work-conserving 
service discipline as well as for the limiting response time distribution in case of first-come-first-served scheduling.
We further show that for phase-type distributed job sizes the limiting workload and response time distribution can be expressed via 
the unique solution of a simple set of ordinary differential equations.

Numerical and analytical results that compare response time of the classic power-of-d-choices algorithm and the LL(d) policy are also presented
and the accuracy of the limiting response time distribution for finite systems is illustrated using simulation. 

\end{abstract}


\maketitle

\section{Introduction}
\label{sec:introduction}

Load balancing plays a crucial role in achieving low latency in large-scale clusters. A simple randomized approach, denoted as SQ(d), exists in assigning incoming jobs to a server that currently holds the fewest number of jobs among a set of $d$ randomly selected servers, the so-called {\it power-of-d-choices} algorithm \cite{vvedenskaya3,mitzenmacher2,Mukherjee2016,aghajani2017}. While this approach yields short queues with high probability in case of first-come-first-served (FCFS) scheduling even for general job size distributions provided that $d$ is chosen sufficiently large \cite{bramsonLB,bramsonAAP}, short queues do not guarantee low latency as the
queue length is only a coarse indicator of the waiting time in the presence of high job size variability. The main issue is that under the FCFS discipline 
short jobs can get stuck behind a single long job which significantly increases the short job latency. In addition when multiple dispatchers are
used to distribute the jobs, {\it race} conditions may occur where multiple schedulers concurrently place jobs on a server that appears lightly loaded
\cite{mitzenmacherOLD}.

To avoid these issues the notion of {\it late binding} was recently introduced in \cite{Sparrow}. With late binding the dispatcher still probes $d$ servers
at random, but the servers do not immediately reply by sending their queue length information. Instead they place a reservation at the end of a local work queue and when the reservation
reaches the front of the queue, the server requests the job associated to the reservation from the dispatcher. In this manner the job is assigned to the server
that is able to launch the job the soonest among the $d$ randomly selected servers. The downside of late binding is that the server always experiences some
idle time in between the execution of two jobs, which implies some efficiency loss. However, whenever the network latencies are much smaller than the shortest job runtimes (and the system load is not extremely high), experiments on a 110-machine cluster show that a scheduler that relies on late binding performs close to an ideal scheduler \cite{Sparrow}.  

Note that late binding as described above is equivalent to assigning the job to the server that has the least workload among
$d$ randomly selected servers, which is known as the LL(d) policy \cite{BramsonLB_Questa}, provided that the network latencies are negligible\footnote{When the network latencies are not negligible compared to the job runtimes, we can regard them as part of the workload of a job such that the job execution consists of two parts: fetching the job and executing it, see Section \ref{sec:overhead}.}.

The main objective of this paper is to study the large-scale limit of the server workload and response time distribution of the LL(d) policy
when employed on a homogeneous cluster subject to Poisson job arrivals with general service times. For this purpose we introduce a partial
integro-differential equation that captures the evolution of the so-called {\it cavity process} and study its equilibrium.  
The key observation, established in \cite{BramsonLB_Questa}, that is under the LL(d) policy with general service time distributions,
the workload distribution of any finite set of servers becomes asymptotically independent as the number of servers tends to infinity
(provided that all the servers employ the same local non-idling service discipline, e.g., FCFS, PS, etc.). 
Moreover, the limit of the marginal workload distribution of a server corresponds to the unique equilibrium environment.

It is worth noting that the LL(d) policy is equivalent to the following system that uses replication with cancellation-on-start 
to reduce waiting times. Arriving jobs are replicated $d$ times and are randomly assigned to $d$ servers (that all operate in FCFS order).
As soon as a single replica starts execution on a server, the remaining $d-1$ replicas are killed (with the additional
assumption that if multiple replicas start at exactly the same time, only one is executed). Prior work on replication 
was mainly done in the context of systems that experience server slowdown and therefore focused on
replication with cancellation-on-job-completion \cite{gardnerSIGM,gardnerOR}, which is considerably different 
from LL(d) as jobs are often (partially) executed on multiple servers in such case. 

Another reason for studying the large-scale limit of the LL(d) policy exists in understanding how much benefit precise workload information 
gives in comparison to the coarser queue length information used by SQ(d).

The main contributions of the paper are as follows:
\begin{enumerate}
\item A partial integro-differential equation to describe the
transient evolution of the limiting workload of a server under the LL(d) policy is derived.
\item An integral equation for the limiting stationary workload distribution is presented together
with a fixed-point iteration to compute its solution. Convergence of the fixed-point
iteration is proven for $\rho < e^{-1/e} \approx 0.6922$.   
\item A simple explicit solution for the limiting workload and response time distribution
is presented in case of exponential job sizes. For phase-type distributed job sizes we prove
that the limiting workload distribution can be computed easily by solving a simple set of 
ordinary differential equations.
\item We present both analytical and numerical results that compare the response time of the
LL(d) policy with the classic SQ(d) policy. These results illustrate that late binding offers a
significant reduction in the response time under a very wide range of loads even when  taking the
idleness caused by late binding into account. 
\end{enumerate}
The paper is structured as follows. The model considered in this paper is described in Section \ref{sec:model}.
The partial integro-differential equation that captures the transient evolution of the workload is
introduced in Section \ref{sec:cavity}, while the integral equation for the limiting stationary workload 
and its associated fixed point equation are presented in Section \ref{sec:workload}. Sections \ref{sec:expo} and \ref{sec:ph}
discuss the special cases of exponential and phase-type distributed job sizes, respectively.
Section \ref{sec:versus} compares the performance of the LL(d) and SQ(d) policies, while Section \ref{sec:finite}
briefly studies the accuracy of the limiting distributions for systems of finite size.
Conclusions are drawn in Section \ref{sec:concl}.

\section{Model description}\label{sec:model}
We consider a system consisting of $N$ single server queues each having an infinite waiting room. Arrivals occur into
the system as a Poisson process with rate $\lambda N$. For each incoming customer 
$d$ queues are selected uniformly at random (with replacement) and the job joins the queue that currently holds the least 
workload with ties being broken uniformly at random. The service discipline is such that the workload
at any queue reduces at rate $1$ when positive, that is, we do not put any restriction on the service discipline apart
from the fact that it is non-idling and identical in each server (unless stated otherwise). 
The workload offered by a job has a general distribution with cdf $G(\cdot)$, pdf $g(\cdot)$, 
mean $\E[G]$ and is such that $G(0)=0$. We define $\rho = \lambda \E[G]$ and assume that $\rho < 1$.

The above model corresponds to the so-called least-loaded supermarket model, denoted as LL(d) in \cite{bramsonLB,BramsonLB_Questa}.
Note that the corresponding Markov process that keeps track of the workloads of the $N$ queues is positive Harris recurrent and has a 
unique stationary probability measure $\mathcal{E}^{(N)}$ whenever the queueing system is subcritical, that is, when $\rho  < 1$, as noted at 
the end of Section 5 in \cite{bramson2011}. In fact, this result is as a special case of \cite[Theorem 2.5]{Foss98}.

\section{Cavity process}\label{sec:cavity}

We start by introducing the cavity process from \cite{BramsonLB_Questa} for the LL(d) supermarket model. 
The  process is intended to capture the evolution
of the workload of a single queue for the limiting system where the number of servers $N$ tends to infinity.

\begin{definition}[LL(d) cavity process]
Let $\mathcal{H}(t)$, $t \geq 0$, be a set of probability measures on $\mathbb{R}$ called the {\it environment process}. 
The {\it cavity process} $X^{\mathcal{H}(\cdot)}(t)$, $t \geq 0$, takes values in $\mathbb{R}$ and is defined as follows. 
Potential arrivals occur according to a Poisson process with rate $\lambda d$. When a potential arrival
occurs at time $t$, we compare the state $X^{\mathcal{H}(\cdot)}(t-)$ just prior to time $t$ with 
the states of $d-1$ independent random variables with law $\mathcal{H}(t)$.
The potential incoming job is assigned to the state among these $d$ states that has the lowest value, where ties are
broken uniformly at random. If the job is assigned to state $X^{\mathcal{H}(\cdot)}(t-)$, we immediately
add the job to the queue, that is, $X^{\mathcal{H}(\cdot)}(t)=X^{\mathcal{H}(\cdot)}(t-)+x$ where $x$ is the
size of the incoming job. Otherwise, the job immediately leaves the system, i.e., $X^{\mathcal{H}(\cdot)}(t)=X^{\mathcal{H}(\cdot)}(t-)$.
Clearly, if $X^{\mathcal{H}(\cdot)}(t-)$ has law  $\mathcal{H}(t)$ a potential arrival at time $t$ joins the queue
with probability $1/d$. Finally, the cavity process decreases at rate one during periods without arrivals and is lower
bounded by zero.
\end{definition}

\begin{definition}[Equilibrium Environment]
When a cavity process $X^{\mathcal{H}(\cdot)}(\cdot)$ has distribution 
$\mathcal{H}(t)$ for all $t \geq 0$, we say that $\mathcal{H}(\cdot)$ is an {\it equilibrium environment process}. 
Further, a probability measure $\mathcal{H}$ is called an {\it equilibrium environment} if 
$\mathcal{H}(t) = \mathcal{H}$ for all $t$ and $X^{\mathcal{H}(\cdot)}(t)$ has distribution $\mathcal{H}$ for all $t$.
\end{definition}

\begin{theorem}[due to Theorem 2.2 of \cite{BramsonLB_Questa}]
Consider the LL(d) supermarket model with $N$ queues, general service times (with mean E[G]),
Poisson arrivals with rate $\lambda N < N/E[G]$
and an identical non-idling service discipline at each queue. 
Let $\mathcal{E}^{(N,N')}$ be the projection of the stationary measure $\mathcal{E}^{(N)}$ of the $N$ workloads into the 
workloads of the first $N'$ queues,
then $\mathcal{E}^{(N,N')}$ converges in total variation to the $N'$-fold convolution of
$\mathcal{E}^{(\infty,1)}$ (in an appropriate metric space) as $N$ tends to infinity. 
Moreover, $\mathcal{E}^{(\infty,1)}$ is the unique equilibrium environment of the LL(d) supermarket model.
\end{theorem}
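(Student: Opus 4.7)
The plan is to follow the standard mean-field / cavity route: establish tightness of the stationary marginals, identify any subsequential limit as a product of copies of an equilibrium environment, and then argue uniqueness of that equilibrium environment. First I would verify that $\{\mathcal{E}^{(N,1)}\}_N$ is tight. A convenient stochastic upper bound is obtained by comparing with a single M/G/1 queue that absorbs every arrival of the original Poisson stream at rate $\lambda$: any individual server in the LL(d) system receives a thinned version of this input, so a monotone coupling yields a uniform-in-$N$ domination, and since $\rho = \lambda \E[G] < 1$ the bound has a proper stationary law. Tightness of $\mathcal{E}^{(N,N')}$ for each fixed $N'$ then follows by a union bound.

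Next I would characterize any subsequential limit. The key combinatorial fact is that, when each arrival samples $d$ servers uniformly with replacement, the probability that two of these samples both lie in a prescribed collection of $N'$ tagged queues is $O(1/N)$ and hence vanishes in the limit. Consequently, as $N \to \infty$, the workloads of the $N'$ tagged queues evolve as independent copies of the cavity process driven by the environment that coincides with the limiting one-dimensional marginal $\mathcal{H}$. Applying this at stationarity forces $\mathcal{H}$ to be an equilibrium environment in the sense of the preceding definition, and forces any subsequential limit of $\mathcal{E}^{(N,N')}$ to be the product measure with marginals $\mathcal{H}$.

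The hard part is uniqueness of the equilibrium environment, since without it we only obtain convergence along subsequences. My approach would be a monotone coupling: given two equilibrium environments $\mathcal{H}_1$ and $\mathcal{H}_2$ that are comparable in the usual stochastic order, build cavity processes $X^{\mathcal{H}_1}$ and $X^{\mathcal{H}_2}$ on a common probability space sharing the same Poisson arrival times and job sizes, and couple the auxiliary $d-1$ samples monotonically. Because the acceptance rule ``join the least-loaded of $d$'' is monotone in the competing samples, the coupled workloads remain ordered for all time, and stationarity then forces equality in law. For the full subcritical range one could alternatively invoke the fixed-point iteration from Section \ref{sec:workload}, which is contractive for $\rho < e^{-1/e}$, and extend uniqueness to $\rho < 1$ via monotonicity of the iteration. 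Finally, to upgrade weak convergence to total-variation convergence I would use a Scheff\'e-type argument, exploiting the fact that the stationary workload law has an atom at $0$ with mass $1-\rho$ (by work conservation) together with an absolutely continuous part on $(0,\infty)$ whose density converges; the main obstacle throughout remains the uniqueness step under general $G$, as the other pieces are comparatively routine once the cavity formulation is in hand.
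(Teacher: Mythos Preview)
The paper does not prove this theorem at all: it is stated with the attribution ``due to Theorem~2.2 of \cite{BramsonLB_Questa}'' and no argument is supplied. Everything downstream---the integro-differential equation, the integral equation for $F$, the fixed-point iteration, the explicit formulas---takes the existence and uniqueness of the equilibrium environment as an external input. There is therefore no ``paper's own proof'' to compare against; you have sketched an argument for a result the authors deliberately imported.

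Your outline is broadly the Bramson--Lu--Prabhakar programme (tightness via an M/G/1 bound, asymptotic decoupling of finitely many tagged queues, identification of subsequential limits as products of an equilibrium environment, then uniqueness), and the first three ingredients are indeed the comparatively routine ones. The uniqueness step, however, has genuine gaps. First, the monotone coupling you propose does not preserve the order as written: with $X_1\le X_2$ and auxiliary samples coupled so that $Y_1^{(i)}\le Y_2^{(i)}$, the event $X_1<\min_i Y_1^{(i)}\le\min_i Y_2^{(i)}\le X_2$ is possible, and then $X_1$ accepts the job while $X_2$ does not, after which $X_1+J$ may exceed $X_2$. The actual uniqueness proof in \cite{BramsonLB_Questa} is substantially more delicate than a one-line coupling, and you also assume without justification that any two equilibrium environments are stochastically comparable. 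Second, appealing to the contraction of Section~\ref{sec:workload} is circular within this paper's logic: the proof of the integral equation there concludes by citing the present theorem for uniqueness of its solution, so you cannot turn around and use that section to supply the uniqueness you need here---and in any case the contraction is only established for $d\rho^d<1$, not for the full range $\rho<1$ claimed in the statement.
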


In other words the above theorem indicates that the workload distributions of any finite set of
$N'$ queues becomes asymptotically independent as $N$ tends to infinity and the marginal workload
distribution of any queue is given by the {\it unique} equilibrium environment $\mathcal{H}$ of the
LL(d) supermarket model. 

We now characterize the evolution of the cavity process associated with the
equilibrium environment process $\mathcal{H}(\cdot)$ of the LL(d) supermarket model.

Let $f(t,s)$ for $s \in \mathbb{R}^+_0 := (0,\infty)$ describe the density of servers which, at time $t$, have workload $s$. Note that $f(t,\cdot)$ is not a real probability density function (pdf) as some of the servers may be idle, denote $F(t,0) := 1 - \int_{0}^{\infty} f(t,s) ds$ (where $f(t,0)$ 
may be defined arbitrarily). In the following we will refer to $f(t,\cdot)$ as a density, and we define its cumulative distribution function (cdf) $F(t,\cdot)$ as $F(t,s) = F(t,0) + \int_{0}^s f(t,u) du$.

For any $d \in \{2,3,\dots\}$, we define the function $c_d(t,u)$ as the density at which a potential arrival at time $t$ joins the cavity
queue with workload $u > 0$.  By definition of the cavity process
associated to the equilibrium environment, this density is given by:
\begin{align}\label{eq:C}
c_d(t,u) &= f(t,u)(1-F(t,u))^{d-1} = f(t,u)\bar{F}(t,u)^{d-1},
\end{align}
where we use the notation $\bar{F}(t,u) = 1 - F(t,u)$ for the complementary cdf (ccdf).
We further denote the probability that a potential arrival at time $t$ joins the cavity queue with workload at most $u$ by $C_d(t,u)$. 
In this case we have, as ties are broken uniformly at random:
\begin{align}\label{eq:barC}
C_d(t,u) &= F(t,0) \sum_{k=0}^{d-1} \binom{d-1}{k }\frac{F(t,0)^k \bar{F}(t,0)^{d-1-k}}{k+1} 
+ \int_{v=0}^u  c_d(t,v) dv 
\nonumber \\
&= \frac{1 - \bar{F}(t,0)^d}{d} + \int_{v=0}^u  c_d(t,v) dv  = \frac{1 - \bar{F}(t,u)^d}{d}.
\end{align}
In particular, $C_d(t,0)$ is the probability that a potential arrival joins an empty cavity queue.

\begin{theorem}
The evolution of the cavity process associated to the equilibrium environment of the LL(d)
supermarket model is captured by the following set of equations: 
\begin{align}\label{eq:PIDE}
&\frac{\partial f(t,s)}{\partial t} - \frac{\partial f(t,s)}{\partial s} = \lambda d\int_0^s c_d(t,u) g(s-u) du \nonumber \\ 
&\hspace*{2cm} + \lambda d C_d(t,0) g(s) - \lambda d c_d(t,s) \\
&\frac{\partial F(t,0)}{\partial t} = f(t,0^+) - \lambda d C_d(t,0),
\label{eq:ODEF0}
\end{align}
for $s>0$, where $f(x,z^+) = \lim_{y \downarrow z} f(x,y)$.
\end{theorem}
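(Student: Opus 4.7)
The plan is to derive the PIDE as the forward Kolmogorov equation for the cavity process, treating the continuous drift (workload draining at unit rate) and the jump part (potential arrivals that are actually assigned to the cavity queue) separately. Concretely, I would test the identity $\partial_t \E[\phi(X^{\mathcal{H}(\cdot)}(t))]$ against smooth compactly supported $\phi$ on $(0,\infty)$, read off the PIDE in $s>0$, and then do a separate atom-balance at $s=0$.

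For $s>0$, the continuous component contributes the transport term: between arrivals, $X^{\mathcal{H}(\cdot)}(t)$ moves at velocity $-1$, so by the standard conservation argument (or Reynolds transport, integrated by parts against $\phi$), mass density at $s$ gains whatever passes through $s$ from above, giving $-\partial_s f(t,s)$ on the left-hand side. For the jump part, I would translate the generator description in the definition of the cavity process into rates: given $X^{\mathcal{H}(\cdot)}(t-)=u>0$, the potential arrival of rate $\lambda d$ is assigned to the cavity queue precisely when the $d-1$ i.i.d.\ environment samples all exceed $u$, which happens with probability $\bar F(t,u)^{d-1}$ (ties have measure zero for $u>0$); hence the rate density of ``arrivals landing on the cavity queue at workload $u$'' is $\lambda d\, f(t,u)\bar F(t,u)^{d-1}=\lambda d\, c_d(t,u)$. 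Adding a job of size $x\sim g$ then produces mass at $s=u+x$, which after integrating in $u\in(0,s)$ gives the convolution $\lambda d\int_0^s c_d(t,u)g(s-u)\,du$. Arrivals that find the cavity in the atom at $0$ contribute the separate source $\lambda d\, C_d(t,0)g(s)$ (using the tie-breaking computation already given for $C_d(t,0)$ in \eqref{eq:barC}), and the outflow at $s$ from jumps away is $-\lambda d\, c_d(t,s)$. Combining these yields \eqref{eq:PIDE}.

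For the boundary ODE \eqref{eq:ODEF0}, I would balance the atom at $0$ directly. The mass at $0$ grows because trajectories with small positive workload drain to $0$; the instantaneous flux into the atom under the unit downward drift is exactly $f(t,0^+)$. It decreases because a potential arrival (rate $\lambda d$) joins an empty cavity queue precisely when all $d-1$ environment samples are also $0$ and the uniform tie-break selects it, which aggregates to rate $\lambda d\, C_d(t,0)=\bar F(t,0)^{d-1}\cdot \lambda$ after expanding; this yields $\partial_t F(t,0)=f(t,0^+)-\lambda d\, C_d(t,0)$. As a consistency check one can integrate \eqref{eq:PIDE} over $s\in(0,\infty)$, use $f(t,\infty)=0$ and add \eqref{eq:ODEF0}, and verify that total mass is conserved.

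The main obstacle I anticipate is the rigorous handling of the boundary at $0$: one must justify that $f(t,\cdot)$ admits a well-defined one-sided limit $f(t,0^+)$ that is really the drift-flux into the atom, and that the weak formulation against test functions gives a strong PIDE on $(0,\infty)$. This requires some regularity of $f(t,\cdot)$ (or, alternatively, stating the result as a distributional identity plus an atom equation). A second, more minor subtlety is handling the tie-breaking at $u=0$ carefully: for $u>0$, ties among continuous environment samples have probability zero, but the binomial/ties computation at $u=0$ is what turns the naive ``all-others-at-$0$'' probability into the correct $(1-\bar F(t,0)^d)/d$ factor appearing in $C_d(t,0)$; \eqref{eq:barC} already packages this cleanly, so it can be cited directly.
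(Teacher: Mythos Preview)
Your approach is correct and essentially the same as the paper's: both derive \eqref{eq:PIDE}--\eqref{eq:ODEF0} as a forward (Kolmogorov/master) equation by separating the unit downward drift from the jump part induced by accepted arrivals, and both do a separate balance of the atom at $0$. The only cosmetic difference is that the paper writes out a direct time-increment expansion $f(t+\Delta,s)=Q_1+Q_2+Q_3$ (no arrival / arrival to a busy cavity / arrival to an idle cavity) and lets $\Delta\downarrow 0$, whereas you phrase the same computation in generator/weak-formulation language; the terms you identify match $Q_1,Q_2,Q_3$ one-to-one, and the paper's derivation is at the same heuristic level of rigor you anticipate in your ``main obstacle'' paragraph.

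One small slip to fix: your verbal description of the idle case and the parenthetical ``$\lambda d\,C_d(t,0)=\bar F(t,0)^{d-1}\cdot\lambda$'' are incorrect. An arrival to the cavity at workload $0$ is accepted whenever the cavity wins the tie-break among the samples that are also at $0$ \emph{and} automatically beats any positive sample, not only when all $d-1$ samples are at $0$; the correct value is $\lambda d\,C_d(t,0)=\lambda\bigl(1-\bar F(t,0)^d\bigr)$, exactly as in \eqref{eq:barC}, which you already plan to cite.
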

\begin{proof}
Assume $s > 0$ and let $s > \Delta > 0$ be arbitrary. 
In order to have a workload of $s$ at time $t+ \Delta$ we need to consider three possible cases:
no arrivals in $[t,t+\Delta]$, an arrival occurs in $[t,t+\Delta]$ when the workload is non-zero and
an arrival occurs in an idle server in $[t,t+\Delta]$. Hence, we can write     
\begin{equation}\label{eq:f1}
f(t+\Delta, s) = Q_1 + Q_2 + Q_3.
\end{equation}
The terms $Q_i$, for $i=1,2$ and $3$ are discussed next.
\begin{enumerate}
\item[1)] No arrivals in the interval $[t,t+\Delta]$: if the cavity queue at time $t$ has a workload exactly equal to $s + \Delta$ and has no arrivals 
in $[t,t+\Delta]$, it will have a workload equal to $s$ at time $t+\Delta$. The density of having a workload $s+\Delta$ at time $t$ is given by $f(t,s+\Delta)$ and the density at which an arrival occurs at the cavity queue at time $t+v, v \in [0,\Delta]$, when it has workload $s+\Delta-v$,
is equal to $\lambda d c_d(t+v, s + \Delta - v)$. Therefore we find:
$$
Q_1 = f(t, s + \Delta) - \lambda d \int_{v=0}^{\Delta} c_d(t+v,s+\Delta - v) dv  + o(\Delta).
$$
\item[2)]A single arrival occurs when the cavity queue is not idle: in this case at some time $t+v, v \in [0,\Delta]$ an arrival of size $s+\Delta-u$ at 
the cavity queue which has workload $u-v$ for some $u \in [v,s+\Delta]$ occurs. We find:
$$
Q_2 = \lambda d  \int_{v=0}^{\Delta} \int_{u=v}^{s+\Delta} c_d(t+v, u-v) g(s+\Delta-u)dudv  + o(\Delta).
$$
\item[3)] A single arrival occurs when the cavity queue is empty: in this case a job 
of size $s + \Delta - v$ arrives at time $t+v$ for some $v \in [0,\Delta]$. Hence,
$$
Q_3 = \lambda d \int_{v=0}^{\Delta} C_d(t+v, 0) g(s+\Delta-v)dv  + o(\Delta).
$$
\end{enumerate}
By subtracting $f(t,s+\Delta)$, dividing by $\Delta$ and letting  $\Delta$ decrease to zero, we find \eqref{eq:PIDE} from \eqref{eq:f1}.

We still require a differential equation for $F(t,0)$, a server may be idle at time $t$ by remaining idle in $[t,t+\Delta]$ or having a workload equal to $\Delta - v, v < \Delta$ at time $t + v$. We therefore find:
\begin{align*}
F(t+\Delta,& 0) = F(t,0) - \lambda d \int_{v = 0}^{\Delta}C_d(t+v,0)dv \\
&+ \int_{v=0}^{\Delta} f(t+v, \Delta - v) du + o(\Delta),
\end{align*}
subtracting $F(t,0)$, dividing by $\Delta$ and letting $\Delta$ tend to zero yields \eqref{eq:ODEF0}.
\end{proof}

\begin{remark}
The set of equations given by (\ref{eq:PIDE}-\ref{eq:ODEF0}) can be solved numerically using the following scheme:
\begin{align*}
f(t+\delta,0^+) &=
\lambda d C_d(t,0),\\
f(t+\delta,s) &=
f(t,s+\delta) + \lambda d \delta \int_0^s c_d(t,u) g(s-u) du \\
&  + \lambda d \delta  C_d(t,0) g(s) - \lambda d \delta  c_d(t,s),
\end{align*}
for $s \geq \delta$. As a boundary condition, we may impose that we start with all servers being idle, 
i.e., for $s>0$ we set $f(0,s) = 0$ and $F(0,0) = 1$.
We are however mainly interested in the long-term behavior of the model, i.e., as $t$ tends to infinity.
\end{remark}

\section{Limiting workload distribution}\label{sec:workload}
As indicated in the previous section, the limiting stationary workload distribution is given by the unique equilibrium environment.
Let $F(s)$ be the cdf of the workload distribution, that is, $F(s)$ represents the probability that the workload is at most
$s$ and let $f(s)$ be its density for $s > 0$. 
Furthermore, similar to \eqref{eq:C} and \eqref{eq:barC}, define
\begin{align}
c_d(u) =  f(u)\bar{F}(u)^{d-1},
\end{align}
and
\begin{align}
C_d(u) = \frac{1 - (1-F(u))^d}{d}.
\end{align}

\begin{theorem}
The stationary workload distribution is the unique distribution that obeys the following integral equation:
\begin{equation}\label{eq:F(s)}
F(s) = (1-\rho) + \lambda \cdot \left(
\int_{0}^s (1 - \bar{F}(u)^d) (1 - G(s-u)) du
\right)
\end{equation}
\end{theorem}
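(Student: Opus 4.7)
The plan is to show that the integral equation \eqref{eq:F(s)} is exactly the stationary version of the dynamical equations \eqref{eq:PIDE}--\eqref{eq:ODEF0}, combined with the boundary value $F(0)=1-\rho$ and the fact that the equilibrium environment is unique (Theorem 2.2 of \cite{BramsonLB_Questa}).

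First I would set $\partial f(t,s)/\partial t=0$ and $\partial F(t,0)/\partial t=0$ in the cavity equations. Equation \eqref{eq:ODEF0} immediately yields the boundary identity $f(0^+)=\lambda d\, C_d(0)$, while \eqref{eq:PIDE} becomes the ODE
\begin{equation*}
-f'(s)=\lambda d\int_0^s c_d(u) g(s-u)\,du+\lambda d\, C_d(0) g(s)-\lambda d\, c_d(s).
\end{equation*}
Integrating this from $s$ to $\infty$, using $f(\infty)=0$, Fubini on the double integral, and $\int_s^\infty c_d(v)\,dv=C_d(\infty)-C_d(s)=1/d-C_d(s)$, I obtain the rate-balance (level-crossing) identity
\begin{equation*}
f(s)=\lambda d\left[C_d(0)(1-G(s))+\int_0^s c_d(u)(1-G(s-u))\,du\right].
\end{equation*}

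Next I would verify that this coincides with the derivative of the right-hand side of \eqref{eq:F(s)}. Differentiating $\lambda\int_0^s(1-\bar F(u)^d)(1-G(s-u))\,du=\lambda d\int_0^s C_d(u)(1-G(s-u))\,du$ with Leibniz (using $G(0)=0$) gives $\lambda d\,C_d(s)-\lambda d\int_0^s C_d(u) g(s-u)\,du$. The equivalence of the two expressions for $f(s)$ then reduces to the integration-by-parts identity
\begin{equation*}
\int_0^s c_d(u)(1-G(s-u))\,du=C_d(s)-C_d(0)(1-G(s))-\int_0^s C_d(u) g(s-u)\,du,
\end{equation*}
which holds since $C_d'=c_d$ and $G(0)=0$.

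It remains to fix the constant of integration. The value $F(0)=1-\rho$ follows from work conservation for the cavity: at stationarity the rate at which work enters the cavity is $\lambda d\cdot\E[G]\cdot C_d(\infty)=\lambda\E[G]=\rho$, which must equal the drain rate $1-F(0)$. Combined with the derivative check above, this pins down $F(s)$ and proves that the stationary workload satisfies \eqref{eq:F(s)}. Uniqueness is not proved independently but inherited: any cdf satisfying \eqref{eq:F(s)} with the correct mass at zero is a stationary solution of the cavity dynamics, and Theorem 2.2 of \cite{BramsonLB_Questa} asserts that the equilibrium environment is unique.

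The main obstacle I anticipate is the bookkeeping in passing from the ODE for $f(s)$ to the integrated form for $F(s)$: the Fubini exchanges and the integration-by-parts that fold the $C_d(0)$ boundary term into the integral must be carried out carefully, and one should justify the vanishing of the boundary terms at infinity (which uses $\rho<1$ so that the stationary workload has finite mean and $\bar F(s)\to 0$ fast enough).
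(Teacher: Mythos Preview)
Your proposal is correct and follows essentially the same route as the paper: set the time derivatives in \eqref{eq:PIDE}--\eqref{eq:ODEF0} to zero, integrate once to obtain a closed expression for $f(s)$, integrate again using $F(0)=1-\rho$, and invoke the uniqueness of the equilibrium environment from \cite{BramsonLB_Questa}.

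The only notable difference is the direction of the first integration. The paper integrates $f'(s)$ from $0$ to $s$, introducing an unknown constant $K$ which is then fixed by the boundary value $f(0^+)=\lambda d\,C_d(0)$; this yields the form $f(s)=\lambda d\bigl(C_d(s)-C_d(0)G(s)-\int_0^s c_d(u)G(s-u)\,du\bigr)$ directly, and a second integration gives \eqref{eq:F(s)}. You instead integrate from $s$ to $\infty$, which produces the level-crossing form $f(s)=\lambda d\bigl(C_d(0)(1-G(s))+\int_0^s c_d(u)(1-G(s-u))\,du\bigr)$ (this is exactly the form the paper records in the Remark following the theorem, citing \cite{bekker2004}), and then you match it to the derivative of \eqref{eq:F(s)} via integration by parts. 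Both reach the same place; the paper's forward integration is marginally cleaner because it only uses the boundary data at $0$ and avoids the need to justify $f(\infty)=0$ and the Fubini swap on an unbounded domain that you flag as the main bookkeeping obstacle. Conversely, your explicit work-conservation derivation of $F(0)=1-\rho$ is a nice addition, since the paper simply asserts this fact.
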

\begin{proof}
By demanding that the derivatives with respect to $t$ are zero in (\ref{eq:PIDE}-\ref{eq:ODEF0}), we find
\begin{align}\label{eq:PIDElim}
\frac{\partial f(s)}{\partial s}&=\lambda d \left(c_d(s) - \int_{0}^{s}c_d(u)\,g(s-u)\,du -C_d(0)\,g(s)\right),
\end{align}
and
\begin{align}\label{eq:PIDElim2}
f(0^+) = \lambda d C_d(0).
\end{align}
Integrating \eqref{eq:PIDElim} once (and relying on the assumption that $G(0) = 0$) we find:
\begin{align}
f(s)
=
K-\lambda d \cdot \left( \frac{1}{d}-C_d(s) + C_d(0) G(s) + \int_0^s c_d(u) G(s-u) du \right),
\end{align}
for an appropriate constant $K$. As we know from \eqref{eq:PIDElim2} that $f(0^+) = \lambda d C_d(0)$, we see that we should set $K$ equal to $\lambda$. We may therefore conclude that
\begin{equation}\label{eq:f(s)}
f(s) =
\lambda d \cdot
\left(
C_d(s) - C_d(0) G(s) - \int_0^s c_d(u) G(s-u) du
\right)
\end{equation}
Integrating equation \eqref{eq:f(s)} once more and using the fact that $F(0) = 1-\rho$, yields
\begin{align*}
F(s) = (1-\rho) + \lambda d  \cdot \left(
\int_{0}^s C_d(u) (1 - G(s-u)) du
\right)
\end{align*}
The uniqueness follows from the fact that there exists a unique equilibrium environment for the LL(d) supermarket model
as stated earlier.
\end{proof}

\begin{remark}
The cavity process evolves as the workload of an M/G/1 queue with a workload dependent arrival rate, we can therefore also
apply Theorem 2.1 in \cite{bekker2004} to the LL(d) cavity process. In this manner we obtain that
$$
f(s) = \lambda d \left( C_d(0) (1-G(s)) + \int_0^s c_d(u) (1-G(s-u)) du \right),
$$
which can easily be shown to be equivalent to \eqref{eq:f(s)} by using the fact that $c_d(u) = \frac{d}{du} C_d(u)$. 
The interpretation of this equation is as follows. The left-hand side of the equation corresponds to the downcrossing rate through level $s$, while the right-hand side denotes the upcrossing rate through $s$.
\end{remark}



\subsection{Fixed point iteration}\label{sec:fixed}
We propose to use the following simple fixed point iteration to solve the integral equation
\eqref{eq:F(s)}:
\[F_{n+1}(s) = (1-\rho) + \lambda \cdot \left(
\int_{0}^s (1 - \bar{F}_n(u)^d) (1 - G(s-u)) du
\right),\]
which we prove converges to the unique fixed point provided that $\rho < d^{-1/d}$.
In Section \ref{sec:ph} we further show that if the service time distribution is a phase-type
distribution, we can directly compute the limiting workload distribution $F(s)$ by solving a 
simple set of differential equations (for any $\rho < 1$), meaning there is no need to 
make use of the above fixed point iteration.  

Define the space $\operatorname{CDF}_{1-\rho} \subseteq [1-\rho,1]^{[0,\infty)}$ to be the space of cumulative distribution functions starting in $1-\rho$, i.e., the space of functions which satisfy: 
\begin{itemize}
\item $F(0) = 1-\rho$,
\item $\lim_{s\rightarrow\infty} F(s) = 1$,
\item for $s,h>0: F(s+h) \geq F(s)$,
\item $\lim_{h\rightarrow 0^+} F(s+h) = F(s)$.
\end{itemize}
On this space we can define an operator $T_d: \operatorname{CDF}_{1-\rho} \longrightarrow \R^{[0,\infty)}$ defined by:
$$
T_dF: [0,\infty) \rightarrow \R: s \mapsto (1-\rho) + \lambda d \cdot \left( \int_{0}^s C_d(u) (1 - G(s-u)) du \right).
$$
\begin{lemma}\label{operator_lem}
For $F \in \operatorname{CDF}_{1-\rho}$, we have $T_dF \in \operatorname{CDF}_{1-\rho}$.
\end{lemma}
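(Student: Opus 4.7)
The plan is to verify the four defining conditions of $\operatorname{CDF}_{1-\rho}$ for $T_dF$ in turn. The first condition $(T_dF)(0) = 1-\rho$ is immediate since the integral in the definition is over an empty interval. For the limit at infinity I would first apply the substitution $v = s-u$ to rewrite
$$(T_dF)(s) = (1-\rho) + \lambda d \int_{0}^{s} C_d(s-v)(1-G(v))\,dv.$$
Since $F \in \operatorname{CDF}_{1-\rho}$ satisfies $F(u) \to 1$, we have $C_d(u) = (1-\bar{F}(u)^d)/d \to 1/d$ as $u \to \infty$. The integrand is dominated by $(1-G(v))/d$, which is integrable on $[0,\infty)$ with integral $\E[G]/d$. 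Dominated convergence then gives $\lim_{s\to\infty}(T_dF)(s) = (1-\rho) + \lambda d \cdot \E[G]/d = 1$.

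For monotonicity I would first note that $F$ non-decreasing implies $\bar{F}$ non-increasing, so $C_d = (1-\bar{F}^d)/d$ is non-decreasing. Then, for $h > 0$, I would use the substitutions $v = s-u$ and $v = s+h-u$ to put the two integrals on a common footing, obtaining
$$(T_dF)(s+h) - (T_dF)(s) = \lambda d \int_{0}^{s} \bigl[C_d(s+h-v) - C_d(s-v)\bigr](1-G(v))\,dv + \lambda d \int_{s}^{s+h} C_d(s+h-v)(1-G(v))\,dv.$$
Both terms are manifestly non-negative since $C_d$ is non-decreasing and $1-G \geq 0$, which gives monotonicity.

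Right-continuity at any $s \geq 0$ follows from the same decomposition with $h$ replaced by a sequence $h_n \downarrow 0$. The second integral is bounded above by $(h_n/d)$ and hence vanishes. For the first integral, right-continuity of $F$ (one of the defining properties inherited from $\operatorname{CDF}_{1-\rho}$) implies right-continuity of $C_d$, so $C_d(s+h_n-v) \to C_d(s-v)$ pointwise in $v$. The integrand is uniformly bounded by $1/d$ on the bounded interval $[0,s]$, so dominated convergence yields the result.

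The step that requires the most care is the monotonicity argument: if one tries to compare the two integrals directly without the change of variables, the piece coming from $G(s+h-u) - G(s-u)$ carries the wrong sign, so the non-negativity of the difference is not transparent. Rewriting both integrals with $C_d$ evaluated at the "shifted" argument and $(1-G)$ evaluated at the substitution variable is the key trick that makes both terms non-negative and also sets up the dominated-convergence argument for right-continuity. All other steps are routine consequences of the properties inherited from $F$.
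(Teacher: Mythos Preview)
Your proof is correct. The paper treats the first, third, and fourth conditions as trivial and only argues the limit $\lim_{s\to\infty}T_dF(s)=1$ in detail, so your explicit verification of monotonicity and right-continuity via the substitution $v=s-u$ is additional work that the paper omits. For the limit itself, the paper proceeds differently: it bounds the integral from above by $\E[G]$ directly (using $dC_d(u)\le 1$), and then for the lower bound fixes $\varepsilon>0$, chooses $U$ large enough that $dC_d(u)\ge\sqrt{1-\varepsilon}$ for $u>U$ and $\lim_{s\to\infty}\int_U^s(1-G(s-u))\,du>\sqrt{1-\varepsilon}\,\E[G]$, and combines these to get $\lim_{s\to\infty}T_dF(s)\ge 1$. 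Your route---substituting $v=s-u$ and invoking dominated convergence with the integrable majorant $(1-G(v))/d$---is cleaner and yields the exact limit in one step rather than via two separate inequalities; the paper's splitting argument is more hands-on but achieves the same end. Both approaches are standard, and your change of variables has the added benefit of setting up the monotonicity and right-continuity arguments uniformly.
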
\begin{proof}
The only non-trivial part is to show that $\lim_{s\rightarrow \infty} T_dF(s) = 1$. We find:
\begin{align*}
\lim_{s\rightarrow \infty} \left| \int_0^s d C_d(u) \cdot (1 - G(s-u)) du \right|
&\leq \lim_{s\rightarrow \infty} \int_0^s (1 - G(s-u)) du\\
&= \E[G],
\end{align*}
which shows that $\lim_{s\rightarrow \infty} T_dF(s) \leq 1$. To obtain the other inequality observe that for any $\varepsilon > 0$, we can find a $U > 0$ for which:
\begin{align*}
&\lim_{s\rightarrow \infty} \int_U^s (1 - G(s-u)) du > \sqrt{1 - \varepsilon} \E[G],
&C_d(u) \geq \sqrt{1 - \varepsilon},
\end{align*}
for $u > U$.
We thus find:
\begin{align*}
\lim_{s\rightarrow \infty} \int_0^s d C_d(u) (1 - G(s-u))du
& \geq \lim_{s\rightarrow \infty} \int_U^s d C_d(u) (1 - G(s-u))du\\
& \geq (1 - \varepsilon)\E[G]
\end{align*}
this shows that $\lim_{s\rightarrow\infty}T_dF(s) \geq 1$
\end{proof}

\begin{remark}
Due to the above lemma we may write $T_d: \operatorname{CDF}_{1-\rho} \rightarrow \operatorname{CDF}_{1-\rho}$.
\end{remark}
\begin{remark}
We can define an order on $\mbox{CDF}_{1-\rho}$ by stating that $F_1 \preceq F_2 \Leftrightarrow \forall s \in [0,\infty): F_1(s) \leq F_2(s)$, then a simple application of the Knaster-Tarski theorem also guarantees the existence of a fixed point of $T_d$. Indeed note that we have $F_1 \preceq F_2 \Rightarrow T_d F_1 \preceq T_d F_2$.
\end{remark}

\begin{theorem}\label{operator_eig}
For any $F_1,F_2 \in \operatorname{CDF}_{1-\rho}$ we have:
$$
d_K(T_dF_1,T_dF_2) \leq d \rho^d \cdot d_K(F_1, F_2),
$$
where $d_K$ denotes the uniform (or Kolmogorov) metric, i.e.,
$d_K(F_1, F_2) = \sup_s |F_1(s)-F_2(s)|$.
\end{theorem}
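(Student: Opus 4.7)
The plan is to bound $|T_dF_1(s) - T_dF_2(s)|$ pointwise by a multiple of $d_K(F_1,F_2)$ that is independent of $s$, then take the supremum. Writing out the definition of $T_d$ via $dC_d(u) = 1-\bar F(u)^d$, the constant $(1-\rho)$ terms cancel, and the $(1-G(s-u))$ factor can be pulled outside the comparison. Thus
\begin{equation*}
T_dF_1(s) - T_dF_2(s) = \lambda \int_0^s \bigl(\bar F_2(u)^d - \bar F_1(u)^d\bigr)(1-G(s-u))\,du.
\end{equation*}

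The two key observations are: (i) since $F_i \in \operatorname{CDF}_{1-\rho}$ and $F_i$ is non-decreasing, we have $\bar F_i(u) \leq \bar F_i(0) = \rho$ for every $u \geq 0$ and $i=1,2$; and (ii) for $a,b \in [0,\rho]$, the factorization
\begin{equation*}
a^d - b^d = (a-b)\sum_{k=0}^{d-1} a^k b^{d-1-k}
\end{equation*}
gives $|a^d - b^d| \leq d\rho^{d-1}|a-b|$. Applying this with $a=\bar F_2(u)$, $b=\bar F_1(u)$ and noting that $|\bar F_1(u)-\bar F_2(u)| = |F_1(u)-F_2(u)| \leq d_K(F_1,F_2)$, I obtain
\begin{equation*}
\bigl|T_dF_1(s) - T_dF_2(s)\bigr| \leq \lambda d \rho^{d-1}\, d_K(F_1,F_2) \int_0^s (1-G(s-u))\,du.
\end{equation*}

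The remaining step is routine: a change of variable $v = s-u$ shows that $\int_0^s (1-G(s-u))du = \int_0^s (1-G(v))dv \leq \int_0^\infty (1-G(v))dv = \E[G]$. Using $\lambda\E[G] = \rho$ collapses the prefactor to $d\rho^d \cdot d_K(F_1,F_2)$, and since this bound is uniform in $s$, taking the supremum yields the claim. I do not expect any serious obstacle here; the only subtlety is the uniform bound $\bar F_i \leq \rho$, which is what converts the naive Lipschitz constant $d$ of the map $x \mapsto x^d$ on $[0,1]$ into the much sharper $d\rho^{d-1}$ and ultimately produces the contraction factor $d\rho^d$ that drives the fixed point iteration in Section~\ref{sec:fixed}.
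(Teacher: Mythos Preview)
Your proof is correct and follows essentially the same route as the paper: bound $|\bar F_1(u)^d-\bar F_2(u)^d|\le d\rho^{d-1}|F_1(u)-F_2(u)|$ using $\bar F_i\le\rho$, then integrate against $1-G$ and use $\lambda\E[G]=\rho$. Your version is in fact slightly cleaner, since you bound $|T_dF_1(s)-T_dF_2(s)|$ pointwise and then take the supremum, whereas the paper introduces an auxiliary $\varepsilon$--$s^*$ step that is not really needed; likewise, your algebraic factorization of $a^d-b^d$ and the paper's appeal to the mean value theorem yield the identical inequality.
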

\begin{proof}
Let $\varepsilon > 0$ be arbitrary and let $s^*$ be such that:
\begin{align*}
\sup_s \int_0^s |(1-F_1(u))^d - (1-F_2(u))^d| (1 - G(s-u)) du\\
<
\int_0^{s^*} |(1-F_1(u))^d - (1-F_2(u))^d| (1 - G(s^*-u)) du + \varepsilon.
\end{align*}
We therefore have that $d_K(T_dF_1,T_dF_2)$ is bounded above by:
\begin{align*}
\lambda \int_0^{s^*} |(1 - F_2(u))^d - (1-F_1(u))^d| (1 - G(s^*-u)) du + \varepsilon,
\end{align*}
We now use the fact (which can be shown by applying the mean value theorem) that for any $x,y \in [0,\rho)$
we have $|x^d - y^d| \leq d  \rho^{d-1} \cdot |x-y|$. This shows by applying the above that we have:
\begin{align*}
d_K(T_dF_1,T_dF_2)
&
< \lambda \int_0^{s^*} d \rho^{d-1} |F_1(u) - F_2(u)| (1 - G(s^*-u)) du + \varepsilon\\
&
\leq \lambda d \rho^{d-1} d_K(F_1,F_2) \int_0^{s^*} (1 - G(s^*-u)) du + \varepsilon\\
&\leq d \rho^{d} d_K(F_1,F_2) + \varepsilon,
\end{align*}
which completes the proof.
\end{proof}

\begin{remark}
In particular for $\rho < e^{-1/e} \approx 0.6922$ the above theorem shows by the Banach fixed-point theorem that $T_d$ admits a unique fixed point which can be found by our proposed fixed point iteration with speed of convergence $d_K(F^*, F_n) \leq \frac{d^n \rho^{nd}}{1 - d \rho^d} d_K(F_1,F_0)$. 
This follows from the fact that $d^{-1/d}$ attains a minimum in $e$. For higher values of 
$\rho$, $d$ must be such that $d \rho^d < 1$ to guarantee convergence via Theorem \ref{operator_eig}.
Numerical experiments using both light-tailed and heavy-tailed distributions suggest that the fixed point
iteration converges quickly for any $\rho < 1$. 
\end{remark}

\section{Exponential Job Sizes}\label{sec:expo}
In the previous section we established an integral equation for the limiting stationary workload
distribution (for any non-idling service discipline). In this section we derive an explicit
expression for this distribution in case of exponential job sizes with mean $1$, that is, when $G(s) = 1 - e^{-s}$
and $\rho = \lambda$.
In addition we also derive an explicit expression for the limiting response time distribution in case
the service discipline is first-come-first-served.

\subsection{Limiting workload distribution}

\begin{theorem}
The ccdf of the limiting stationary workload distribution for the $LL(d)$ policy for any non-idling 
service discipline with exponential job sizes with mean $1$ is given by:
\begin{equation}\label{eq:expExact}
\bar{F}(s) = (\lambda + (\lambda ^{1-d} - \lambda) e^{(d-1)s})^{\frac{1}{1-d}}.
\end{equation}
\end{theorem}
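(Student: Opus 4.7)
The plan is to reduce the integral equation \eqref{eq:F(s)} with $G(s)=1-e^{-s}$ and $\rho=\lambda$ to a first-order Bernoulli ODE for $\bar{F}$, and then solve it in closed form.

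First I would substitute $1-G(s-u)=e^{-(s-u)}$ into \eqref{eq:F(s)} and rewrite things in terms of $\bar F = 1-F$. Using $\int_0^s e^{-(s-u)}\,du = 1 - e^{-s}$ and $F(0)=1-\lambda$, the equation simplifies to
$$
e^{s}\bar F(s) \;=\; \lambda \;+\; \lambda \int_0^{s} \bar F(u)^{d}\, e^{u}\, du.
$$
The right-hand side is absolutely continuous in $s$, so the left-hand side is too, and differentiating both sides gives
$$
\bar F'(s) \;=\; \lambda\, \bar F(s)^{d} - \bar F(s),
$$
which is a Bernoulli equation.

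Next I would solve it via the standard Bernoulli substitution $v(s) = \bar F(s)^{1-d}$, which yields the linear ODE $v'(s) = (1-d)\bigl(\lambda - v(s)\bigr)$ with general solution $v(s) = \lambda + Ce^{(d-1)s}$. The initial condition comes from $F(0)=1-\lambda$, so $\bar F(0)=\lambda$ and hence $v(0)=\lambda^{1-d}$, giving $C = \lambda^{1-d}-\lambda$. Inverting $v = \bar F^{1-d}$ then produces exactly \eqref{eq:expExact}.

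There is essentially no hard step: the only thing one should be slightly careful about is justifying differentiation under the integral (immediate from continuity of $\bar F$ and boundedness of $\bar F^d$) and checking that the resulting $\bar F$ is indeed a valid ccdf, i.e., that it is nonnegative, nonincreasing, starts at $\lambda$ and tends to $0$. Nonincreasingness follows because $\lambda \bar F^d - \bar F \le 0$ whenever $\bar F \le \lambda^{1/(1-d)}$, which holds throughout since $\bar F(0)=\lambda < \lambda^{1/(1-d)}$ and the ODE keeps $\bar F$ below this equilibrium; the limit $\bar F(s) \to 0$ is clear from the explicit formula since $\lambda^{1-d} - \lambda > 0$ and the exponent $1/(1-d)$ is negative. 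Uniqueness of the solution to \eqref{eq:F(s)} was already established by the preceding theorem, so producing this one explicit $\bar F$ that satisfies the equation suffices.
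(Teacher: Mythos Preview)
Your proposal is correct and follows essentially the same route as the paper: both derive the Bernoulli ODE $\bar F'(s)=\lambda\bar F(s)^d-\bar F(s)$ with $\bar F(0)=\lambda$ from the integral equation \eqref{eq:F(s)} and solve it explicitly. You carry out the Bernoulli substitution $v=\bar F^{1-d}$ in full, whereas the paper simply states that the ODE ``can be solved explicitly and one easily verifies'' the formula; your extra remarks on validity of $\bar F$ as a ccdf are a nice addition but not needed given uniqueness.
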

\begin{proof}
Using \eqref{eq:F(s)} with $G(s) = 1 - e^{-s}$
and $\rho = \lambda$, we have
\begin{equation}\label{eq:F(s)exponential}
F(s) = (1-\lambda) + \lambda d \int_{0}^s C_d(u) e^{u-s} du,
\end{equation}
Taking the derivative on both sides and using Leibniz integral rule, we find the following simple ODE for $F(s)$:
\begin{align}\label{eq:ODEexp}
F'(s) &= \lambda (1-\bar{F}(s)^d) -\lambda \int_{0}^s (1-\bar{F}(u)^d)e^{u-s}du \nonumber\\
&= \lambda (1-\bar{F}(s)^d) - (F(s)-(1-\lambda)) \nonumber \\
&= \bar{F}(s) - \lambda \bar{F}(s)^d,
\end{align}
with boundary condition $F(0) = 1-\lambda$, equivalently:
$$
\bar{F}'(s) = \lambda \bar{F}(s)^d - \bar{F}(s), 
$$
with $\bar{F}(0) = \lambda$.
This ODE can be solved explicitly and one easily verifies that the solution $\bar{F}(s)$ is given by:
$$
\bar{F}(s) = (\lambda + (\lambda ^{1-d} - \lambda) e^{(d-1)s})^{\frac{1}{1-d}}.
$$
\end{proof}

\begin{remark}
There is a striking and unexpected similarity between the limiting workload distribution of the LL(d) policy
and the response time distribution of the replication with cancellation-on-completion \cite[Section 5]{gardnerOR} in case of exponential
job sizes in the sense that the response time distribution of the latter system solves exactly
the same ODE as in \eqref{eq:ODEexp}, except that it is subject to the boundary condition $\bar{F}(0) = 1$.
\end{remark}

\begin{remark}
As $d$ tends to infinity, $\bar{F}(s)$ tends to $\lambda e^{-s}$
as $(\lambda^{1-d} - \lambda) ^{1/(1-d)}$ tends to $\lambda$.
This result is expected as for large $d$ we expect that a fraction $\lambda$ of the
servers contains exactly one job and the remaining workload of any such job is
exponentially distributed due to the memoryless nature of the exponential distribution.
\end{remark}



In order to obtain an expression for the expected workload of a server, we first recall the following 
integral representation for the analytic continuation of the hypergeometric function $\prescript{}{2}{F}_1(a,b;c;z)$ 
\cite[Chapter 15]{abramowitz64}
\begin{align}\label{eq:int2F1}
\prescript{}{2}{F}_1(a,b;c;z) = \frac{1}{B(b,c-b)} \int_{0}^1 x^{b-1} (1-x)^{c-b-1} (1-zx)^{-a} dx,
\end{align}
where $B(x,y) = \int_0^1 t^{x-1} (1-t)^{y-1} dt$ is the Beta function. 
This integral expression is valid for any $c > b > 0$ and $z < 1$. When $|z| < 1$ this function can be represented
as an infinite sum using the Pochhammer symbol (or falling factorial) $(q)_n = \prod_{k=0}^{n-1} (q+k)$ when
$n > 0$ and $(q)_0 = 1$: 
\begin{align}\label{eq:sum2F1}
\prescript{}{2}{F}_1(a,b;c;z) = \sum_{n=0}^\infty \frac{(a)_n(b)_n}{(c)_n} \frac{z^n}{n!}.
\end{align}

\begin{theorem}\label{th:Wd}
The mean $W_d(\lambda)$ of the limiting workload distribution of a server under the $LL(d)$ policy with exponential job sizes with mean $1$
is given by:
\begin{equation}\label{W_d}
W_d(\lambda) = \sum_{n=0}^{\infty} \frac{\lambda^{dn+1}}{1 + n(d-1)},
\end{equation}
in particular we find:
\begin{align*}
W_2(\lambda) &= - \frac{\log\left( 1 - \lambda ^ 2 \right)}{\lambda},\\
W_3(\lambda) &= - \frac{1}{\sqrt{\lambda}} \cdot \log\left( \frac{\sqrt{1 - \lambda^3}}{\lambda^{3/2} + 1} \right).
\end{align*}
\end{theorem}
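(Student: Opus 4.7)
The plan is to compute the mean $W_d(\lambda)=\int_0^\infty \bar{F}(s)\,ds$ by using the ODE \eqref{eq:ODEexp} for $\bar F$ as a change-of-variables device, rather than trying to integrate the closed form \eqref{eq:expExact} directly. Since $\bar F$ is strictly decreasing from $\bar F(0)=\lambda$ to $0$, I would substitute $y=\bar F(s)$, so that $dy=\bar F'(s)\,ds=(\lambda y^d-y)\,ds$, giving
\[
W_d(\lambda)=\int_0^\infty \bar F(s)\,ds=\int_\lambda^0 \frac{y\,dy}{\lambda y^d-y}=\int_0^\lambda \frac{dy}{1-\lambda y^{d-1}}.
\]
This turns the problem into a single elementary integral whose integrand is uniformly bounded, since on $[0,\lambda]$ one has $\lambda y^{d-1}\le \lambda^d<1$.

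Next, I would expand the integrand as a geometric series $1/(1-\lambda y^{d-1})=\sum_{n\ge 0}\lambda^n y^{n(d-1)}$. The bound $\lambda y^{d-1}\le \lambda^d<1$ gives uniform convergence on $[0,\lambda]$, so term-by-term integration is justified and yields
\[
W_d(\lambda)=\sum_{n=0}^{\infty}\lambda^n\cdot\frac{\lambda^{n(d-1)+1}}{n(d-1)+1}=\sum_{n=0}^{\infty}\frac{\lambda^{dn+1}}{1+n(d-1)},
\]
which is \eqref{W_d}. (As a sanity check, one can also recognize this series as $\lambda\cdot \prescript{}{2}{F}_1(1,1/(d-1);1+1/(d-1);\lambda^d)$ and recover the same integral via the representation \eqref{eq:int2F1} with $c=b+1$, but the direct route above avoids that detour.)

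For the two explicit closed forms, I would specialize the series. For $d=2$, the sum becomes $\sum_{n\ge 0}\lambda^{2n+1}/(n+1)=\lambda^{-1}\sum_{m\ge 1}\lambda^{2m}/m=-\log(1-\lambda^2)/\lambda$. For $d=3$, reindexing as $W_3(\lambda)=\lambda^{-1/2}\sum_{n\ge 0}(\lambda^{3/2})^{2n+1}/(2n+1)$ identifies the sum as $\lambda^{-1/2}\operatorname{artanh}(\lambda^{3/2})=\tfrac{1}{2}\lambda^{-1/2}\log\bigl((1+\lambda^{3/2})/(1-\lambda^{3/2})\bigr)$, which is algebraically the same as the form stated after rationalizing, since $(1+\lambda^{3/2})^2/(1-\lambda^3)=(1+\lambda^{3/2})/(1-\lambda^{3/2})$.

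I do not expect any real obstacle: the only subtle step is the substitution $y=\bar F(s)$, which works cleanly because the ODE gives a tidy expression for $ds$ and because $\bar F$ is monotone with explicit endpoint values. Everything else is termwise integration of a uniformly convergent geometric series and two short algebraic manipulations for $d=2,3$.
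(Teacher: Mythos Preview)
Your argument is correct and noticeably more streamlined than the paper's. The paper starts from the closed form \eqref{eq:expExact}, performs the substitutions $y=e^{-s}$ and $x=y^{d-1}$, recognizes the resulting integral as a Beta-type integral expressible through $\prescript{}{2}{F}_1$, and then applies two linear transformation formulas for the hypergeometric function before finally invoking the series representation \eqref{eq:sum2F1}. Your route bypasses all of this: using the ODE $\bar F'(s)=\lambda\bar F(s)^d-\bar F(s)$ as a change of variables $y=\bar F(s)$ collapses $\int_0^\infty \bar F(s)\,ds$ directly to $\int_0^\lambda (1-\lambda y^{d-1})^{-1}\,dy$, after which a geometric-series expansion (justified by the uniform bound $\lambda y^{d-1}\le\lambda^d<1$) gives \eqref{W_d} immediately. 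The two approaches are equivalent in spirit---both land on $\lambda\,\prescript{}{2}{F}_1(1,1/(d-1);1+1/(d-1);\lambda^d)$---but yours is self-contained and avoids the detour through the integral representation \eqref{eq:int2F1} and the transformation identities, at the modest cost of relying on the ODE rather than the explicit $\bar F$. Your treatment of the $d=2,3$ cases via Taylor series of $\log$ and $\operatorname{artanh}$ matches the paper's stated expressions after the rationalization you describe.
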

\begin{proof}
We employ the notation $b = \lambda^{1-d} - \lambda$. 
We begin by computing (using $y=e^{-s}$ and $x=y^{d-1}$):
\begin{align*}
W_d(\lambda) &= \int_0^\infty \bar{F}(s)ds\\
&=  \int_0^1 \frac{1}{(\lambda y^{d-1} + b)^{1/(d-1)}} dy\\
&=  \frac{1}{b^{1/(d-1)}} \frac{1}{(d-1)} \int_0^1 \frac{x^{-(d-2)/(d-1)}}{(1+\frac{\lambda}{b} x)^{1/(d-1)}} dx
\end{align*}
Hence, by \eqref{eq:int2F1} this last integral can be expressed via the hypergeometric function $\prescript{}{2}{F}_1$ as 
$$
W_d(\lambda) = \frac{1}{b^{1/(d-1)}}\cdot \prescript{}{2}{F}_1\left( \frac{1}{d-1}, \frac{1}{d-1}; 1 + \frac{1}{d-1}; -\frac{\lambda}{b}\right).
$$
Note that we cannot directly use the sum representation of $_2F_1$ as $\lambda / b$ may become greater than $1$ (which happens when 
$\lambda$ gets close to one). Therefore we now employ the well-known linear transformation formulas:
\begin{align}
\prescript{}{2}{F}_1(a,b;c;z) & = (1-z)^{c-a-b} \cdot  \prescript{}{2}{F}_1(c-a, c-b; c; z) \nonumber\\
\prescript{}{2}{F}_1(a,b;c;z) & = (1-z)^{-a} \cdot  \prescript{}{2}{F}_1\left(a,c-b;c;\frac{z}{z-1}\right).
\label{eq:lintrans2}
\end{align}
Using these indicates that
\begin{align*}
W_d(\lambda) & = \frac{1}{b^{1/(d-1)}} \left(1+\frac{\lambda}{b}  \right)^{-\frac{1}{d-1}} \cdot  \prescript{}{2}{F}_1\left( 1,\frac{1}{d-1};1+\frac{1}{d-1};\lambda^d \right) \\
& = \lambda \cdot \prescript{}{2}{F}_1\left( 1,\frac{1}{d-1};1+\frac{1}{d-1};\lambda^d \right) 
\end{align*}
As $\lambda^d \in (0,1)$, we can use the sum representation given by \eqref{eq:sum2F1} to find that
$$
W_d(\lambda) =\sum_{n=0}^{\infty} \frac{\lambda^{nd+1}}{1 + n (d-1)},
$$
as $(1)_n = n!$ and $(1/(d-1))_n/(1/(d-1)+1)_n = 1/(1+n(d-1))$.
The expressions for $d=2,3$ can be either found directly by looking at the Taylor expansion of the logarithm or by solving the integral representation of $W_d(\lambda)$.
\end{proof}

\subsection{Limiting response time distribution}
We now focus on the limiting response time distribution $R$ in case the service discipline is first-come-first-served 
and denote its cdf as $F_R(s)$.

\begin{theorem}
The ccdf of the limiting response time distribution of the LL(d) policy with
FCFS service and exponential job sizes with mean $1$ is given by:
\begin{equation}\label{eq:FR}
\bar{F}_R(s) = \left( \lambda^d + (1 -\lambda^d) e^{(d-1)s} \right)^{\frac{1}{1-d}}.
\end{equation}
\end{theorem}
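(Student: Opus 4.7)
The plan is to derive a simple first-order linear ODE for $\bar F_R$ whose inhomogeneity is $\bar F(s)^d$, and then solve it using the explicit form of $\bar F$ from \eqref{eq:expExact}. First I would write $R = U + S$, where $S\sim\mathrm{Exp}(1)$ is the tagged job's own service time and $U$ is the pre-arrival workload at the selected server. Since the $d$ probed cavity queues are iid with workload ccdf $\bar F$, I would observe that $U = \min(W_1,\ldots,W_d)$ has ccdf $\bar F(u)^d$ on $(0,\infty)$, an atom of mass $1-\lambda^d$ at $u=0$, and density $d\, c_d(u) = d\,\bar F(u)^{d-1} f(u)$ on $(0,\infty)$.

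Using the memoryless property of $S$, I would then write
\begin{equation*}
\bar F_R(s) = (1-\lambda^d)\, e^{-s} + \int_0^s e^{-(s-u)}\, d\, c_d(u)\, du + \bar F(s)^d,
\end{equation*}
and differentiate in $s$. Leibniz's rule produces a boundary term $d\, c_d(s)$ which, since $c_d(u) = -\bar F'(u)\bar F(u)^{d-1}$, exactly cancels the derivative $d\bar F(s)^{d-1}\bar F'(s)$ of $\bar F(s)^d$. The remaining terms collapse to the linear ODE
\begin{equation*}
\bar F_R'(s) + \bar F_R(s) = \bar F(s)^d, \qquad \bar F_R(0) = 1.
\end{equation*}

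Next I would check that $g(s) := \bar F(s)/\lambda$ satisfies the same ODE with the same initial condition. Using \eqref{eq:ODEexp} this amounts to
\begin{equation*}
g'(s) + g(s) = \frac{\lambda \bar F(s)^d - \bar F(s)}{\lambda} + \frac{\bar F(s)}{\lambda} = \bar F(s)^d,
\end{equation*}
together with $g(0) = \bar F(0)/\lambda = 1$. Uniqueness then gives $\bar F_R(s) = \bar F(s)/\lambda$. Raising this identity to the power $1-d$ and substituting \eqref{eq:expExact} yields
\begin{equation*}
\bar F_R(s)^{1-d} = \lambda^{d-1}\bar F(s)^{1-d} = \lambda^{d-1}\bigl(\lambda + (\lambda^{1-d} - \lambda)\, e^{(d-1)s}\bigr) = \lambda^d + (1-\lambda^d)\, e^{(d-1)s},
\end{equation*}
which is exactly the claimed formula.

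The main obstacle will be bookkeeping the atom of $U$ at $0$ cleanly inside the Leibniz differentiation and verifying the cancellation that reduces everything to $\bar F_R' + \bar F_R = \bar F^d$; once that ODE is in place, recognizing $\bar F/\lambda$ as a solution and reading off the closed form from \eqref{eq:expExact} is routine.
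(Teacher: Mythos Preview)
Your proof is correct and takes a genuinely different route from the paper's. The paper conditions on the exponential service time to write
\[
\bar F_R(s) = e^{-s} + \int_0^s \bar F(s-t)^d e^{-t}\,dt,
\]
then substitutes the explicit expression \eqref{eq:expExact} for $\bar F$ and evaluates the resulting integral directly. You instead condition on the minimum workload $U$, differentiate to obtain the linear ODE $\bar F_R' + \bar F_R = \bar F^d$, and then observe via \eqref{eq:ODEexp} that $\bar F/\lambda$ solves the same initial value problem, so $\bar F_R = \bar F/\lambda$. Your cancellation of the Leibniz boundary term $d\,c_d(s)$ against $(\bar F(s)^d)'$ is correct, and the atom at $0$ is handled properly. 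The payoff of your approach is that it avoids any explicit integration and immediately yields the identity $\bar F_R(s) = \bar F(s)/\lambda$; the paper only notices the equivalent fact (equal failure rates of $\bar F$ and $\bar F_R$) in a remark after its computation. The paper's approach, on the other hand, is self-contained in the sense that it does not need to invoke the ODE \eqref{eq:ODEexp} from the earlier workload theorem.
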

\begin{proof}
Let $E$ be an exponential random variable with mean $1$ and let $T_i,i = 1,\dots, d$ denote the $d$ independent workloads of the
$d$ randomly selected servers. We find:
\begin{align*}
\bar{F}_R(s)
&= \P\left\{E + \min_{i=1}^d T_i > s\right\}\\
&= e^{-s} + \int_0^{s} \bar{F}(s-t)^d e^{-t} dt.
\end{align*}
Due to \eqref{eq:expExact} and using standard integration techniques, this integral can be simplified to:
$$
\bar{F}_R(s) = e^{-s} \cdot \left(1 + \frac{1}{\lambda b^{1/(d-1)}} \cdot \int_{\left(\frac{b}{\lambda}\right)^{1/(d-1)}}^{e^s \left(\frac{b}{\lambda}\right)^{1/(d-1)} } (1 + x^{d-1})^{d/(1-d)} dx \right),
$$
where $b= \lambda^{1-d} - \lambda$ as before.
This is an integral that can be solved exactly to prove the statement. 
\end{proof}

\begin{remark}
It is easy to verify that the workload and response time distributions $F(s)$ and $F_R(s)$ have 
the same increasing failure rate $r(s) = f(s)/\bar{F}(s) = f_R(s)/\bar{F}_R(s)$.
\end{remark}

\begin{remark}
As $d \rightarrow \infty$, $F_R(s)$ tends to $e^{-s}$, as expected. 
\end{remark}

\begin{theorem}\label{eig:T_d}
The mean of the limiting response time distribution for the LL(d) policy with FCFS service
and exponential job sizes with mean $1$ is given by:
\begin{equation}\label{eq:T_kclosed}
T_d(\lambda)
=
\sum_{n=0}^{\infty} \frac{\lambda^{dn}}{1 + n \cdot (d-1)}.
\end{equation}
\end{theorem}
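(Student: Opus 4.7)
The plan is to mimic almost verbatim the integral-to-hypergeometric reduction used in the proof of Theorem \ref{th:Wd}. Starting from $T_d(\lambda) = \int_0^\infty \bar{F}_R(s)\, ds$ together with the explicit formula \eqref{eq:FR}, I first factor out $e^{(d-1)s}$ inside the bracket so that $\bar{F}_R(s) = e^{-s}\bigl(\lambda^d e^{-(d-1)s} + (1-\lambda^d)\bigr)^{1/(1-d)}$. The substitution $y = e^{-s}$ then absorbs the external $e^{-s}$ via $ds = -dy/y$ and turns the integral into $T_d(\lambda) = \int_0^1 \bigl(\lambda^d y^{d-1} + (1-\lambda^d)\bigr)^{1/(1-d)} dy$. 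A second substitution $x = y^{d-1}$ puts this in the exact shape of the integral representation \eqref{eq:int2F1} of $\prescript{}{2}{F}_1$, with parameters $a = b = 1/(d-1)$, $c = 1 + 1/(d-1)$, and argument $z = -\lambda^d/(1-\lambda^d)$. Writing $B = 1-\lambda^d$ one obtains $T_d(\lambda) = B^{1/(1-d)} \cdot \prescript{}{2}{F}_1\bigl(1/(d-1), 1/(d-1); 1 + 1/(d-1); -\lambda^d/B\bigr)$.

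The next step handles the fact that $|\lambda^d/B|$ can exceed $1$ (so \eqref{eq:sum2F1} is not directly applicable), which is precisely the same nuisance that appeared in Theorem \ref{th:Wd}. I would apply the linear transformation \eqref{eq:lintrans2}. A short calculation gives $1-z = 1/B$ and $z/(z-1) = \lambda^d$, and since $c-b = 1$, the transformation yields $B^{1/(d-1)} \prescript{}{2}{F}_1\bigl(1/(d-1), 1; 1 + 1/(d-1); \lambda^d\bigr)$. The external factor $B^{1/(1-d)}$ cancels the produced $B^{1/(d-1)}$, leaving $T_d(\lambda) = \prescript{}{2}{F}_1\bigl(1/(d-1), 1; 1 + 1/(d-1); \lambda^d\bigr)$.

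Finally, because $\lambda^d \in (0,1)$, the series representation \eqref{eq:sum2F1} applies directly. Using $(1)_n = n!$ together with the telescoping identity $(1/(d-1))_n / (1+1/(d-1))_n = 1/(1 + n(d-1))$ collapses the series to $\sum_{n=0}^\infty \lambda^{dn}/(1+n(d-1))$, which is \eqref{eq:T_kclosed}. I expect the main obstacle to be purely bookkeeping: tracking the signs of the exponents $1/(d-1)$ versus $1/(1-d)$ and confirming that the prefactor cancellation is exact. No genuinely new analytical difficulty arises beyond what was already resolved in Theorem \ref{th:Wd}.
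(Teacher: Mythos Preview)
Your proof is correct. Every step checks out: the factoring of $\bar{F}_R(s)$, the substitutions $y=e^{-s}$ and $x=y^{d-1}$, the identification with \eqref{eq:int2F1}, the single application of \eqref{eq:lintrans2} (here $c-b=1$ and $z/(z-1)=\lambda^d$), the cancellation of $B^{1/(1-d)}B^{1/(d-1)}$, and the final collapse of the Pochhammer ratio to $1/(1+n(d-1))$.

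However, the route is genuinely different from the paper's. The paper does \emph{not} integrate $\bar{F}_R$; instead it writes $T_d(\lambda)=1+\int_0^\infty \bar{F}(s)^d\,ds$ (response time $=$ exponential service $+$ minimum of $d$ workloads), plugs in \eqref{eq:expExact}, and after the substitutions $v$ and $y=v/(1+v)$ lands directly on $1+\tfrac{\lambda^d}{d}\,\prescript{}{2}{F}_1\!\bigl(\tfrac{d}{d-1},1;1+\tfrac{d}{d-1};\lambda^d\bigr)$, whose argument is already in $(0,1)$ so no linear transformation is needed. Your path instead replays Theorem~\ref{th:Wd} verbatim on $\bar{F}_R$ in place of $\bar{F}$, which the paper itself acknowledges as a valid alternative in the remark following the theorem. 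The trade-off: the paper's route avoids the analytic-continuation detour through \eqref{eq:lintrans2} but requires a less obvious substitution and carries the ``$1+$'' and the $\lambda^d/d$ prefactor to the end; your route needs the transformation but yields $T_d(\lambda)=\prescript{}{2}{F}_1\!\bigl(\tfrac{1}{d-1},1;1+\tfrac{1}{d-1};\lambda^d\bigr)$ cleanly, with no additive constant, and makes the parallel $W_d(\lambda)=\lambda\,T_d(\lambda)$ with Theorem~\ref{th:Wd} immediate at the hypergeometric level.
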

\begin{proof}
Let $E \sim \mbox{Exp}(1)$, we find:
\begin{align*}
T_d(\lambda) &= \E[E + \min\{T_1,\dots,T_d\}]\\
&=1 + \int_0^\infty \bar{F}(s)^dds.
\end{align*}
Using \eqref{eq:expExact} and standard integration techniques (mainly substitution), we can reduce this expression to:
$$
T_d(\lambda) = 1 + \frac{1}{\lambda^{d/(d-1)} \cdot (d-1)} \cdot \int_0^{\lambda / b} \frac{v^{1/(d-1)}}{(1+v)^{d/(d-1)}} dv.
$$
Using the substitution $y = \frac{v}{1 + v}$, one can show that the above integral reduces to
$$
1 + \frac{\lambda^d}{d} \cdot \prescript{}{2}F_1 \left( \frac{d}{d-1},1;1+\frac{d}{d-1};\lambda^d\right).
$$
As $\lambda^d \in (0,1)$, one can use \eqref{eq:sum2F1} and the claimed equality follows as $(1)_n=n!$ and $(d/(d-1))_n/(1+d/(d-1))_n = d/((n+1)(d-1)+1)$. 
\end{proof}

\begin{remark}
In the proof of Theorem \ref{eig:T_d} it is also possible to directly use \eqref{eq:FR} instead of relying on \eqref{eq:expExact}.
\end{remark}

\begin{remark}
Note that  $W_d(\lambda) = \lambda T_d(\lambda)$, which is expected due to Little's law and the fact that
the mean workload of a server under LL(d) service with exponential job sizes with mean $1$ is equal to the mean number
of jobs in such a server.
The relation $W_d(\lambda) = \lambda T_d(\lambda)$ also yields simple formulas for 
$T_2(\lambda)$ and $T_3(\lambda)$ due to Theorem \ref{th:Wd}. It is possible do derive similar expressions for larger $d$
values, but these become more and more complex as $d$ increases. 
\end{remark}

\begin{remark}
In \cite{gardnerOR} the mean of the limiting response time distribution
in case of exponential job sizes of the replication with cancellation-on-completion policy (under the {\it assumption} of the
independence ansatz) was argued to be equal to
$$
\E[T^{RR(d)}] = \frac{\prescript{}{2}F_1(1,1;1+\frac{d}{d-1};\frac{-\rho}{1-\rho})}{\mu d(1-\rho)}.
$$
This expression can be reduced to a simple sum formula as follows (using \eqref{eq:lintrans2} and \eqref{eq:sum2F1} as $\rho \in (0,1)$) 
\begin{align*}
\prescript{}{2}F_1(1,1;1+\frac{d}{d-1};\frac{-\rho}{1-\rho})
&= (1-\rho) \prescript{}{2}F_1(1,\frac{d}{d-1};1+\frac{d}{d-1};\rho)\\
&= (1-\rho) \sum_{n=0}^{\infty} \frac{(1)_n \left( \frac{d}{d-1}\right)_n}{\left(1+\frac{d}{d-1}\right)_n} \frac{\rho^n}{n!},
\end{align*}
which allows us to conclude that
\[\E[T^{RR(d)}] = \frac{1}{\mu} \sum_{n=0}^\infty \frac{\rho^n}{n(d-1)+d}.
\]
Note that $E[T^{RR(d)}]$ converges to $1/(d\mu)$ as $\rho$ tends to zero due to the independent execution times of the replicas in \cite{gardnerOR}.
\end{remark}

\section{Phase-type and deterministic job sizes}\label{sec:ph}

In Section \ref{sec:fixed} we proposed a fixed point iteration to compute the limiting workload distribution $F(s)$ under LL(d) 
for any service time distribution $G$, that was proven to converge if
$d \rho^d < 1$. We now show that $F(s)$ can also be directly obtained as the solution of a set of coupled ordinary
differential equations (ODEs) for any $\rho < 1$, provided that the job lengths follow a phase-type (PH) distribution. 
PH distributions are distributions with a modulating finite state background Markov chain \cite{latouche1} and any general positive-valued distribution 
can be approximated arbitrary closely with a PH distributions. Further, various fitting tools are available online 
for phase-type distributions (e.g., \cite{panchenko1,Kriege2014}). 
A PH distribution with $G(0)=0$ is fully characterized by a stochastic vector $\alpha = (\alpha_i)_{i=1}^n$ and a subgenerator matrix  $A = (a_{i,j})_{i,j=1}^n$
such that $\bar G(s) = \alpha e^{As} \textbf{1}$, where $\textbf{1}$ is a column vector of ones.
 
\begin{theorem}\label{eig:PHD}
Suppose the job lengths have a PH distribution characterized by $(\alpha,A)$, then the ccdf of the 
limiting workload distribution under the LL(d) policy satisfies:
\begin{align*}
\bar{F}'(s) & = -\lambda ((1-\bar{F}(s)^d) + \alpha A h(s)),\\
h'(s) & = (1-\bar{F}(s)^d) \textbf{1} + A h(s),
\end{align*}
with $\bar{F}(0) = \rho$, $h(0) = 0$ and
$h(s): \R \rightarrow \R^{n \times 1}$.
\end{theorem}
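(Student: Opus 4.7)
The plan is to reduce the integral equation for $\bar F$ to an ODE system by exploiting the Markovian structure hidden inside the PH representation $\bar G(s)=\alpha e^{As}\mathbf{1}$. First I would rewrite the integral equation \eqref{eq:F(s)} in complementary form as
\begin{equation*}
\bar F(s) \;=\; \rho - \lambda \int_0^s \bigl(1-\bar F(u)^d\bigr)\,\bar G(s-u)\,du,
\end{equation*}
using $1-G = \bar G$ and $1-F(0) = \rho$.

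Next I would substitute the PH expression $\bar G(s-u)=\alpha e^{A(s-u)}\mathbf{1}$ and pull $\alpha$ out of the integral. This motivates introducing the auxiliary column-vector function
\begin{equation*}
h(s) \;:=\; \int_0^s \bigl(1-\bar F(u)^d\bigr)\, e^{A(s-u)}\mathbf{1}\,du,
\end{equation*}
so that $\bar F(s) = \rho - \lambda\, \alpha\, h(s)$. Differentiating $h$ by Leibniz's rule (the integrand at $u=s$ contributes $(1-\bar F(s)^d)\mathbf{1}$, and differentiating $e^{A(s-u)}$ under the integral brings out a factor $A$) immediately yields
\begin{equation*}
h'(s) \;=\; \bigl(1-\bar F(s)^d\bigr)\mathbf{1} + A\, h(s),
\end{equation*}
which is the second ODE in the statement. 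Differentiating the relation $\bar F(s)=\rho-\lambda\alpha h(s)$ and substituting for $h'(s)$ gives
\begin{equation*}
\bar F'(s) \;=\; -\lambda\, \alpha\, h'(s) \;=\; -\lambda\bigl[(1-\bar F(s)^d)\,\alpha\mathbf{1} + \alpha A\, h(s)\bigr] \;=\; -\lambda\bigl[(1-\bar F(s)^d) + \alpha A\, h(s)\bigr],
\end{equation*}
using $\alpha\mathbf{1}=1$. This is the first ODE. The initial conditions follow immediately: $h(0)=0$ by definition, and then $\bar F(0)=\rho-\lambda\alpha\cdot 0=\rho$, consistent with $F(0)=1-\rho$.

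There is no real obstacle in this argument; everything is a short manipulation once the right auxiliary variable $h$ is identified. The only point worth being careful about is justifying differentiation under the integral sign, which is standard here since $\bar F$ is monotone and bounded and $e^{As}$ is analytic. If desired one can also remark on uniqueness of the solution: Theorem~2 in Section~\ref{sec:workload} already guarantees that the limiting workload distribution is unique, and the ODE system has locally Lipschitz right-hand side (polynomial in $\bar F$ plus linear in $h$), so the ODE solution coincides with the unique fixed point of \eqref{eq:F(s)} on $[0,\infty)$.
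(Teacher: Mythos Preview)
Your proposal is correct and is essentially the same argument as the paper's: both introduce the vector-valued auxiliary $h(s)=\int_0^s(1-\bar F(u)^d)e^{A(s-u)}\mathbf{1}\,du$, differentiate it via Leibniz to obtain $h'=(1-\bar F^d)\mathbf{1}+Ah$, and then read off $\bar F'=-\lambda\alpha h'$ from the integral equation. The only cosmetic difference is that the paper writes out $h$ componentwise (using $I_n=\sum_k e_k e_k^T$) before assembling the matrix ODE, whereas you work with the vector directly.
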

\begin{proof}
For $i \in \{1,\dots,n\}$ we define:
$$
h_{i}(s) = \int_0^s (1-\bar{F}(u)^d) e_i^T e^{(s-u)A} \textbf{1} du,
$$
where $e^T_i$ is the $i$-th row of the identity matrix $I_n$.
First note that $h_{i}(0) = 0$. We now derive a differential equation for $h_{i}(s)$. 
Using the equality $I_n = \sum_{k=1}^n e_k e_k^T$ we find :
\begin{align*}
h_{i}'(s)
&= (1-\bar{F}(s)^d) + \int_0^s (1 - \bar{F}(u)^d) e_i^T A I_n e^{(s-u)A} \textbf{1} du\\
&= (1-\bar{F}(s)^d) + \sum_{k=1}^n \int_0^s (1-\bar{F}(u)^d) e_i^T A e_k e_k^T e^{(s-u)A}\textbf{1} du \\
&= (1-\bar{F}(s)^d) + \sum_{k=1}^n a_{i,k} h_{k}(s).
\end{align*}
In matrix notation this yields:
$$
h'(s) = (1-\bar{F}(s)^d) \textbf{1} + A h(s).
$$
Due to \eqref{eq:F(s)} and $\bar G(s-u) = \alpha e^{(s-u)A} \textbf{1}$, 
we have $\bar{F}'(s) = -\lambda \alpha h'(s)$, which yields the equation for $\bar{F}'(s)$.
\end{proof}

We now generalize this result to the case where the service times 
are the sum of a deterministic random variable and a PH distribution.

\begin{theorem}\label{th:X+PH}
Assume the service times are the sum of a deterministic random variable with mean $\tau$ and
a phase-type distribution characterized by $(\alpha,A)$,
i.e., $\bar{G}(s) = I_{\{s\leq \tau\}} + I_{\{s > \tau\}} \alpha e^{(s-\tau) A} \textbf{1}$,
then the ccdf of the limiting workload distribution under the LL(d) policy satisfies:
\begin{align*}
\bar{F}'(s) &= \lambda(\bar{F}(s)^d - 1), & s \leq \tau,\\
\bar{F}'(s) &= -\lambda((1 - \bar{F}(s)^d) +\alpha A h(s-\tau)), & s > \tau,\\
h'(s) & = (1-\bar{F}(s)^d) \textbf{1} + A h(s),
\end{align*}
with $h(0) = 0$ and $\bar{F}(0) = \rho = \lambda (\tau + \alpha(-A)^{-1}\textbf{1})$.
\end{theorem}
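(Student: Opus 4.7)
The plan is to mimic the derivation in Theorem \ref{eig:PHD}, but starting from the integral equation \eqref{eq:F(s)} with the particular ccdf $\bar G(s) = \mathbf{1}_{\{s \leq \tau\}} + \mathbf{1}_{\{s > \tau\}}\,\alpha e^{(s-\tau)A}\mathbf{1}$. Substituting into \eqref{eq:F(s)} and splitting the convolution integral at the point $u = s-\tau$ where the indicator flips gives a clean case analysis on the size of $s$ relative to $\tau$.

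For $s \leq \tau$, the argument $s-u$ never exceeds $\tau$ for $u \in [0,s]$, so $\bar G(s-u) \equiv 1$ and the integral equation reduces to $F(s) = (1-\rho) + \lambda \int_0^s (1-\bar F(u)^d)\,du$. Differentiating yields $F'(s) = \lambda(1-\bar F(s)^d)$, i.e., $\bar F'(s) = \lambda(\bar F(s)^d - 1)$, as claimed.

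For $s > \tau$, I would split the integral as
\[
F(s) = (1-\rho) + \lambda \int_0^{s-\tau}(1-\bar F(u)^d)\,\alpha e^{(s-\tau-u)A}\mathbf{1}\,du + \lambda \int_{s-\tau}^{s}(1-\bar F(u)^d)\,du.
\]
Recognizing the first integral as $\alpha h(s-\tau)$, where $h$ is exactly the vector-valued function introduced in Theorem \ref{eig:PHD} and its ODE $h'(s) = (1-\bar F(s)^d)\mathbf{1} + Ah(s)$ is re-used unchanged. Differentiating with respect to $s$ by Leibniz's rule, the boundary terms $-\lambda(1-\bar F(s-\tau)^d)$ from the second integral and $+\lambda\alpha\mathbf{1}(1-\bar F(s-\tau)^d)$ coming from $\alpha h'(s-\tau)$ cancel because $\alpha\mathbf{1}=1$. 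What survives is $\lambda(1-\bar F(s)^d) + \lambda\alpha A h(s-\tau)$, and negating gives the stated expression for $\bar F'(s)$.

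The boundary conditions follow from $F(0) = 1-\rho$, which gives $\bar F(0)=\rho$, together with the standard PH identity $\alpha(-A)^{-1}\mathbf{1}$ for the mean of the PH component, so $\E[G] = \tau + \alpha(-A)^{-1}\mathbf{1}$ and $\rho = \lambda\E[G]$; and $h(0)=0$ by definition. The only delicate step is the Leibniz differentiation in the second case, where one must carefully track the moving endpoint $s-\tau$ and verify that the boundary contributions precisely cancel thanks to $\alpha\mathbf{1}=1$; this is the main obstacle, but it is essentially a bookkeeping issue rather than a conceptual one, since the rest of the argument is a direct extension of the proof of Theorem \ref{eig:PHD}.
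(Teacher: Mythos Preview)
Your proposal is correct and follows essentially the same approach as the paper: split the integral in \eqref{eq:F(s)} at $u=s-\tau$, identify the phase-type part as $\alpha h(s-\tau)$ with $h$ as in Theorem~\ref{eig:PHD}, and differentiate. You spell out the cancellation of the boundary terms via $\alpha\mathbf{1}=1$ more explicitly than the paper does, but otherwise the arguments are the same.
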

\begin{proof}
We distinguish two cases: first let $s \in [0,\tau]$, we find that $\bar{F}(s) = \rho - \lambda \int_0^s 1 - \bar{F}(u)^d du$, deriving this equation once yields the first equation.

For the second note that we have (using the notation from the proof of Theorem \ref{eig:PHD}):
$$
\bar{F}(s) = \rho - \lambda \alpha h(s-\tau)  - \lambda \int_{s-\tau}^s (1-\bar{F}(u)^d) du.
$$
Taking the derivative and using the expression for $h'(s)$ found in Theorem \ref{eig:PHD} completes the proof.
\end{proof}

\begin{theorem}
If the job sizes are deterministic and equal to one, the ccdf $\bar{F}(s)$ is determined by $\bar{F}(0)=\lambda$, 
and
\begin{align*}
\bar{F}'(s) &= \lambda (\bar{F}(s)^d - 1) & s \in [0,1),\\
\bar{F}'(s) &= \lambda (\bar{F}(s)^d - \bar{F}(s-1)^d) & s \geq 1.
\end{align*}
\end{theorem}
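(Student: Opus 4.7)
The plan is to specialize the integral equation \eqref{eq:F(s)} to the deterministic distribution $G(s)=I_{\{s\geq 1\}}$ (for which $\rho=\lambda$) and differentiate piecewise in $s$. The initial condition $\bar F(0)=\lambda$ is immediate from $F(0)=1-\rho$.

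The key observation is that $1-G(s-u)=I_{\{u>s-1\}}$, so the integrand in \eqref{eq:F(s)} is supported on $u\in[\max(0,s-1),s]$. First I would treat $s\in[0,1)$, where the surviving range is all of $[0,s]$ and a direct application of the fundamental theorem of calculus yields $F'(s)=\lambda(1-\bar F(s)^d)$, equivalently $\bar F'(s)=\lambda(\bar F(s)^d-1)$. For $s\geq 1$ the range becomes $[s-1,s]$; applying Leibniz's rule to the integral with variable upper \emph{and} lower limits gives
\[F'(s)=\lambda\bigl[(1-\bar F(s)^d)-(1-\bar F(s-1)^d)\bigr],\]
which rearranges to $\bar F'(s)=\lambda(\bar F(s)^d-\bar F(s-1)^d)$.

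The only technical point is justifying that differentiation through the integral sign is legitimate. This is routine: the right-hand side of \eqref{eq:F(s)} is continuous in $s$ (as a convolution of a bounded function against a bounded integrable kernel), so $\bar F$ is continuous and $u\mapsto 1-\bar F(u)^d$ is continuous on $[0,\infty)$, which suffices for the Leibniz rule on each open interval $(n,n+1)$. One should not expect $\bar F$ to be $C^1$ at the integer points where the kernel indicator jumps; indeed at $s=1$ the two one-sided derivatives differ by $\lambda(1-\lambda^d)$, fully consistent with the indicator nature of the convolution kernel.
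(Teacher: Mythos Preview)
Your proof is correct and follows the same route the paper intends: the paper's own proof simply says ``similar to the proof of Theorem~\ref{th:X+PH}'', which amounts exactly to specializing the integral equation \eqref{eq:F(s)} to the deterministic kernel and differentiating piecewise in $s$. Your write-up is in fact more explicit than the paper's, and your remark about the one-sided derivatives at $s=1$ is a nice touch that the paper omits.
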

\begin{proof}
The proof is similar to the proof of Theorem \ref{th:X+PH}.
\end{proof}

\begin{remark}
We note that the ODEs and DDEs presented in this section have a unique solution: the existence follows from the fact that \eqref{eq:F(s)} solves the ODE/DDE, 
while the uniqueness follows from \cite[Section 23, theorem A]{driver1977}.
\end{remark}

\begin{remark}
It is easy to compute the ccdf of the response time distribution $\bar{F}_R(s)$ given $\bar{F}(s)$ as
the probability that a new arrival joins a queue with a workload exceeding $s$ is given by $\bar{F}(s)^d$
under the LL(d) policy.
\end{remark}

\section{LL(d) versus SQ(d)}\label{sec:versus}

The aim of this section is to study the margin of improvement that can be achieved by using 
exact workload information as opposed to the coarser queue length information used
by SQ(d). This margin of improvement is of interest to understand the possible 
response time improvements offered by schedulers that implement late binding
(as discussed in the introduction). Furthermore, we also compare the SQ(d) policy with
the LL(d) policy where the job sizes of the latter take the late binding
overhead into account. We start by focusing on exponential job sizes, for which
we can also establish some closed form results.

\subsection{Exponential job sizes}
In this subsection we compare the limiting response time of the LL(d) and SQ(d) policies for exponential job sizes
with mean $1$ and FCFS service. This comparison provides an answer on the reduction in the response times that can be obtained
if the workloads at the different servers are known instead of the coarser queue length information.
To distinguish between the response times of both policies we make use of the superscripts $^{(LL)}$ and $^{(SQ)}$. 
For the SQ(d) policy the mean of the limiting response time distribution is given by \cite{mitzenmacher2}  
\[T_d^{(SQ)}(\lambda) = \frac{1}{\lambda} \sum_{k=1}^\infty \lambda^{\frac{d^k-1}{d-1}}.\]

\begin{theorem} \label{th:LLsmallerSQ}
The mean of the limiting response time distribution for the $LL(d)$ policy is smaller than the
mean for the $SQ(d)$ policy for exponential job sizes with mean $1$, moreover 
\begin{align*}
T_d^{(SQ)}(\lambda)- T_d^{(LL)}(\lambda) = \frac{1}{\lambda}\sum_{k=1}^{\infty} A_k, 
\end{align*}
where for $\lambda \in (0,1)$
\begin{align*}
A_k = \lambda^{\frac{d^{k+1}-1}{d-1}} - \sum_{n=1}^{d^k} \frac{\lambda^{nd + 1 + \frac{d^{k+1} - d^2}{d-1}}}{1 + n(d-1)+(d^k-d)} > 0.
\end{align*}
\end{theorem}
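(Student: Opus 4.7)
The plan is to work from the two explicit series: Theorem~\ref{eig:T_d} gives $\lambda T_d^{(LL)}(\lambda) = \sum_{n=0}^\infty \lambda^{nd+1}/(1+n(d-1))$, and the stated SQ(d) formula reads $\lambda T_d^{(SQ)}(\lambda) = \sum_{k=1}^\infty \lambda^{L_k}$, where I write $L_k := (d^k-1)/(d-1) = 1+d+\cdots+d^{k-1}$ for brevity. The $n=0$ term of the first sum and the $k=1$ term of the second both equal $\lambda$ and cancel, leaving $\lambda(T_d^{(SQ)}-T_d^{(LL)}) = \sum_{k=2}^\infty \lambda^{L_k} - \sum_{n=1}^\infty \lambda^{nd+1}/(1+n(d-1))$.

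Next I would partition the indices of the remaining LL(d) sum into consecutive blocks $B_k = \{dL_{k-1}+1, \ldots, dL_k\}$ for $k \geq 1$ (with $L_0 = 0$). Since $dL_k - dL_{k-1} = d^k$, block $B_k$ has exactly $d^k$ elements, and the blocks partition $\{1,2,\ldots\}$. Reindexing each block by $m = n - dL_{k-1} \in \{1,\ldots,d^k\}$, one verifies $nd+1 = md + 1 + (d^{k+1}-d^2)/(d-1)$ and $1+n(d-1) = 1 + m(d-1) + (d^k - d)$, which is precisely the summand appearing in the statement. Pairing the shifted SQ(d) term $\lambda^{L_{k+1}}$ (from $\sum_{k\ge 1}\lambda^{L_{k+1}}$) with the $k$-th LL(d) block---and noting $L_{k+1} = dL_k + 1$ coincides with the $m=1$ exponent---delivers the decomposition $\lambda(T_d^{(SQ)} - T_d^{(LL)}) = \sum_{k=1}^\infty A_k$.

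For the positivity $A_k > 0$, both the numerator exponent and the denominator of the $k$-th inner sum are strictly increasing in $m$. At $m=1$ the exponent simplifies to $L_{k+1}$ and the denominator to exactly $d^k$, so that term equals $\lambda^{L_{k+1}}/d^k$. For $m \geq 2$ every summand is strictly smaller (since $\lambda \in (0,1)$ and both exponent and denominator are larger), so the full sum of $d^k$ terms is strictly bounded above by $d^k \cdot \lambda^{L_{k+1}}/d^k = \lambda^{L_{k+1}}$, which is precisely the leading term of $A_k$. Combined with the identity above, this simultaneously gives $A_k > 0$ and the announced inequality $T_d^{(LL)}(\lambda) < T_d^{(SQ)}(\lambda)$.

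The main obstacle is the index-matching bookkeeping: selecting blocks of size precisely $d^k$ starting at $dL_{k-1}+1$ so that the change of variable produces exactly the exponent and denominator displayed in the statement, and then recognising that the $m=1$ summand carries the ``magic'' denominator $d^k$ that makes the positivity bound tight. Once this alignment is in hand, both the identity and the strict positivity follow by elementary monotonicity arguments.
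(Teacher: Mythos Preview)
Your proposal is correct and follows essentially the same route as the paper: both partition the LL(d) series over $n\ge 1$ into blocks $\{L_k,\dots,dL_k\}$ (your $B_k$, with $L_k=(d^k-1)/(d-1)$), pair the $k$-th block with the SQ(d) term $\lambda^{L_{k+1}}$, and bound every summand in the block by the first one $\lambda^{L_{k+1}}/d^k$ to get strict positivity of $A_k$. Your presentation is in fact slightly cleaner, since you carry out the reindexing $m=n-dL_{k-1}$ explicitly to recover the displayed form of $A_k$, and you use the sharp denominator $d^k$ at $m=1$ (the paper's displayed bound writes $(1+\dots+d^{k-1})(d-1)=d^k-1$ in the denominator, which appears to be a small typo for $1+(1+\dots+d^{k-1})(d-1)=d^k$).
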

\begin{proof}
Due to \eqref{eq:T_kclosed}, we need to show:
$$
\sum_{n=1}^{\infty} \frac{\lambda^{dn+1}}{1 + n(d-1)}
\leq
\sum_{k=2}^{\infty} \lambda^{\frac{d^k-1}{d-1}}.
$$
To see this, we group the terms on the left hand side with $n \in \{\sum_{s=0}^{k-1} d^{s-1},\dots,\sum_{s=1}^k d^s\}$ together and compare their sum 
with the term $\lambda^{\frac{d^{k+1}-1}{d-1}}$ on the right hand side for $k \geq 1$.
We have
\begin{align*}
\sum_{n=1+\dots + d^{k-1}}^{d+\dots+d^k} \frac{\lambda^{nd+1}}{1+n(d-1)}
&<
\sum_{n=1+\dots + d^{k-1}}^{d+\dots+d^k} \frac{\lambda^{d (1+d+\dots+d^{k-1}) + 1}}{(1+\dots+d^{k-1})(d-1)}\\
&= \lambda ^{1 + d + \dots + d^k} = \lambda^{\frac{d^{k+1}-1}{d-1}}.
\end{align*}
Hence, the result follows. 

\end{proof}

\begin{theorem}\label{th:ratioSDdLLd}
For the ratio of the mean of the limiting response time distribution of SQ(d) and LL(d) for exponential job sizes with mean $1$ we have
$$
\lim_{\lambda \rightarrow 1} T_d^{(SQ)}(\lambda) / T_d^{(LL)}(\lambda) = \frac{d-1}{\log(d)}.
$$
\end{theorem}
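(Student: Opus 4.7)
The plan is to establish matching leading asymptotics as $\lambda \to 1^-$, showing that
\begin{equation*}
T_d^{(LL)}(\lambda) \sim \frac{-\log(1-\lambda)}{d-1} \qquad \text{and} \qquad T_d^{(SQ)}(\lambda) \sim \frac{-\log(1-\lambda)}{\log d},
\end{equation*}
from which the ratio tends to $(d-1)/\log d$.

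For $T_d^{(LL)}$, I would start from Theorem \ref{eig:T_d}: the $n=0$ term contributes $1$, and the tail $\sum_{n\geq 1}\lambda^{dn}/(1+n(d-1))$ is sandwiched between $\frac{1}{d-1}\sum_{n\geq 1}\lambda^{dn}/(n+1)$ and $\frac{1}{d-1}\sum_{n\geq 1}\lambda^{dn}/n = -\log(1-\lambda^d)/(d-1)$. Since $1-\lambda^d = (1-\lambda)(1+\lambda+\cdots+\lambda^{d-1}) \sim d(1-\lambda)$, both bounds equal $-\log(1-\lambda)/(d-1) + O(1)$, and the constant $n=0$ contribution is swamped by the divergent logarithm, yielding the claimed asymptotic for $T_d^{(LL)}$.

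For $T_d^{(SQ)}$ the series is lacunary, so I would proceed by a cutoff argument. Set $\delta = -\log\lambda$ (so $\delta \sim 1-\lambda$) and define $k^*(\delta) := \lfloor \log_d((d-1)/\delta)\rfloor$. Fix $M \in \mathbb{N}$ and split $\sum_{k\geq 1}\lambda^{(d^k-1)/(d-1)} = \sum_k e^{-\delta(d^k-1)/(d-1)}$ into three ranges. For $k \leq k^*-M$ the exponent is at most $d^{-M}$, so each summand lies in $[1-d^{-M},1]$. For $k \geq k^*+M+1$ the exponent is at least $d^M$, so by super-exponential decay the tail is bounded by a constant $C_M$ depending only on $M$, with $C_M \to 0$ as $M \to \infty$. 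The remaining $2M+1$ intermediate terms contribute at most $2M+1$. Dividing by $k^*(\delta)$ and sending $\lambda \to 1^-$ (so $k^* \to \infty$) for fixed $M$, then letting $M \to \infty$, yields $T_d^{(SQ)}(\lambda)/k^*(\delta) \to 1$. Combined with $k^*(\delta) = -\log\delta/\log d + O(1)$ and $\delta \sim 1-\lambda$, this gives the claimed asymptotic.

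The main delicate point is the $T_d^{(SQ)}$ step: because the exponents $(d^k-1)/(d-1)$ grow geometrically in $k$, an integral replacement is not directly available, and one must verify separately that the summands are essentially $1$ for $k$ meaningfully below $k^*$ and that the tail past $k^*$ is uniformly negligible. The two-parameter limit in $\lambda$ and $M$ captures this cleanly. Once both asymptotics are in place, their ratio yields the claimed limit $(d-1)/\log d$.
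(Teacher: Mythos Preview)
Your argument is correct, and it takes a genuinely different route from the paper's proof. The paper does not compute the individual asymptotics of $T_d^{(LL)}$ and $T_d^{(SQ)}$; instead it partitions the $T_d^{(LL)}$ series into blocks indexed by the $T_d^{(SQ)}$ terms (as in the proof of Theorem~\ref{th:LLsmallerSQ}), forms the truncated ratio $U_K(\lambda)$, evaluates $\lim_{K\to\infty}\lim_{\lambda\to 1}U_K(\lambda)$ via the Digamma identity $\sum_{n=a}^{b}\frac{1}{1+n(d-1)}=\frac{1}{d-1}\bigl(\psi(\tfrac{b+1}{d-1}+\tfrac{1}{d-1})-\psi(\tfrac{a}{d-1}+\tfrac{1}{d-1})\bigr)$ and the Stolz--Ces\`aro theorem, and finally justifies the interchange of limits by the Moore--Osgood theorem.

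Your approach is more elementary and arguably more informative: you obtain the explicit growth rates $T_d^{(LL)}(\lambda)=\frac{-\log(1-\lambda)}{d-1}+O(1)$ and $T_d^{(SQ)}(\lambda)\sim\frac{-\log(1-\lambda)}{\log d}$, which immediately give the limit of the ratio and also quantify how each quantity diverges. The sandwich for $T_d^{(LL)}$ is clean, and your cutoff argument for the lacunary series $T_d^{(SQ)}$ is the natural substitute for an integral comparison; the two-parameter $\limsup/\liminf$ squeeze in $(\lambda,M)$ plays the same role as the Moore--Osgood step in the paper but avoids invoking uniform convergence on a compact set. The paper's method, by contrast, bypasses asymptotics entirely and works directly with the ratio, at the cost of importing several classical tools (Digamma, Stolz--Ces\`aro, Moore--Osgood).
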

\begin{proof}
Let $K \in \mathbb{N}$ be arbitrary and define:
$$
U_K(\lambda) = \frac{1+\sum_{k=1}^K \sum_{n=1+\dots+d^{k-1}}^{d+\dots+d^k}\frac{\lambda^{nd+1}}{1+n(d-1)}}{1+{\sum_{k=1}^K \lambda^{\frac{d^{k+1}-1}{d-1}}}}.
$$
We note that we have:
$$
\lim_{\lambda \rightarrow 1} \lim_{K\rightarrow \infty} U_K(\lambda) = \lim_{\lambda \rightarrow 1}\frac{T_d^{(LL)}(\lambda)}{T_d^{(SQ)}(\lambda)}.
$$
On the other hand (with $\psi$ the Digamma function \cite[Chapter 6]{abramowitz64}) we have:
\begin{align*}
\lim_{K\rightarrow \infty} \lim_{\lambda \rightarrow 1} U_K(\lambda)
&=\lim_{K\rightarrow \infty} \frac{\sum_{k=0}^K \sum_{n=\frac{d^k-1}{d-1}}^{d\frac{d^k-1}{d-1}} \frac{1}{1+n(d-1)}}{\sum_{k=0}^K 1}\\
&= \frac{1}{d-1}  \lim_{K\rightarrow \infty} \frac{ \sum_{k=0}^K\psi \left(\frac{d^{k+1}}{d-1}\right)-\psi
   \left(\frac{d^k}{d-1}\right)}{\sum_{k=0}^K 1}.\\
\end{align*}
Since $\lim_{k\rightarrow \infty} \psi \left(\frac{d^{k+1}}{d-1}\right)-\psi
   \left(\frac{d^k}{d-1}\right) = \log(d)$, we may apply the Stolz-Cesaro theorem to assert that
$$
\lim_{K\rightarrow\infty} \lim_{\lambda \rightarrow 1} U_K(\lambda) = \frac{\log(d)}{d-1}.
$$
If we may interchange the limits this would incur:
$$
\lim_{\lambda \rightarrow 1} \frac{T_d^{(LL)}(\lambda)}{T_d^{(SQ)}(\lambda)} = \frac{\log(d)}{d-1}.
$$
An application of the Moore-Osgood theorem \cite[p100]{graves1946} implies that we may indeed interchange limits: as $U_K$ and $U = \lim_{K \rightarrow \infty} U_K$ 
are continuous functions defined on the compact set $[0,1]$ and $U_K$ converges pointwise to $U$, it follows that this convergence is also uniform. Moreover, we trivially have pointwise convergence of $\lim_{\lambda \rightarrow 1} U_K(\lambda)$.
\end{proof}

\begin{remark}
As $(d-1)/\log(d)$ tends to infinity as $d$ becomes large, we note that for any $c>0$ there exists a $\lambda$ and $d$ such
that the ratio $T_d^{(SQ)}(\lambda) / T_d^{(LL)}(\lambda) > c$. In other words, 
for arbitrary $\lambda$ and $d$, there is no bound on how much worse 
the SQ(d) policy performs than the LL(d) policy. 
\end{remark}

In Figure \ref{fig:SQvsLLmean} we plot the ratio $T_d^{(SQ)}(\lambda)/T_d^{(LL)}(\lambda)$ as a function
of $\lambda$. We note that this ratio  increases with $\lambda$ and approaches a constant as
$\lambda$ approaches one. Looking at this figure, the limit values for the ratio $T_d^{(SQ)}(\lambda)/T_d^{(LL)}(\lambda)$ 
as $\lambda$ tends to one may appear to be less than $(d-1)/\log(d)$ (as shown in Theorem \ref{th:ratioSDdLLd}), 
but this is simply due to the fact
that this ratio still increases significantly between $0.999$ and $1$.
From this figure we may conclude that the increase in the mean of the limiting response time distribution
by using the coarser queue length information instead of the exact workload is below $50\%$ when $d=2$
for exponential job sizes. For larger $d$ we see a more significant increase under high load.

We further note that the curves for different $d$ values cross one another. 
Intuitively this can be understood
by noting that for $\lambda$ small many jobs select an idle server and when an idle server is selected
knowing the queue length is equally good as knowing the workload. When $d$ increases it becomes more likely that an idle server is
selected and thus we expect the mean response time ratio to decrease with increasing $d$ when $\lambda$ is small. 
For large $\lambda$ it becomes unlikely that one of the selected queues is idle and SQ(d) has to rely
on the coarser queue length information. When $\lambda$ is large, we therefore
see a larger loss of more information as $d$ increases and thus the  mean response time ratio now increases 
with increasing $d$.

\begin{figure}[t]
\center
\includegraphics[width=0.95\linewidth]{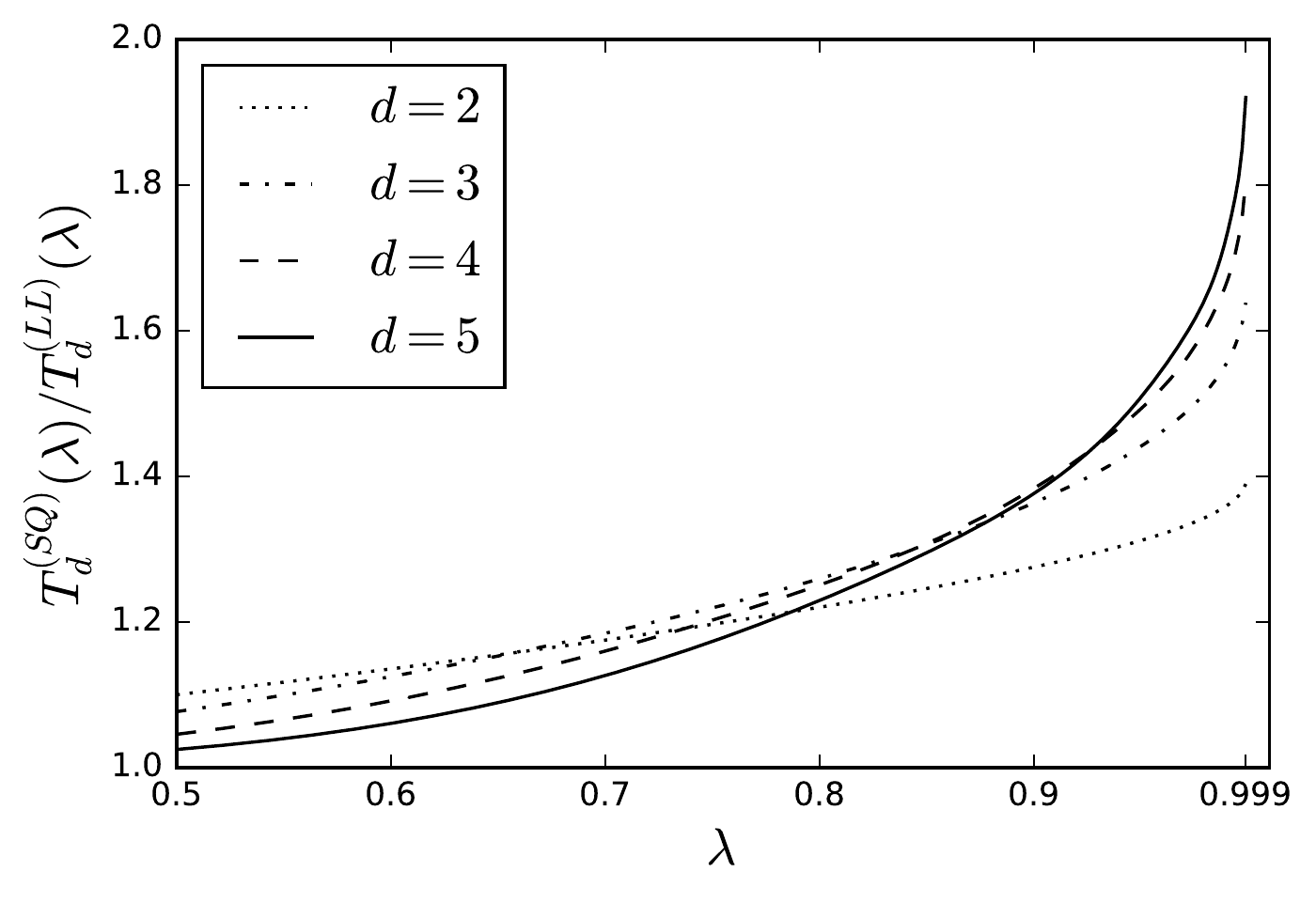}
\caption{Ratio of the mean of the limiting response time distribution of SQ(d) and LL(d) for exponential job
sizes with mean $1$, FCFS service as a function of $\lambda$.}
\label{fig:SQvsLLmean}
\end{figure}

Apart from comparing the mean response times, we can also easily compare the response time distribution
of the LL(d) and SQ(d) policy. For the SQ(d) policy it is not hard to establish that the ccdf of
the limiting response time distribution can be written as 
\begin{align}\label{eq:FRSQd}
\bar{F}_R^{(SQ)}(s) &= \sum_{k=1}^\infty \left(\lambda^{(d^{k-1}-1)d/(d-1)}-\lambda^{(d^{k}-1)d/(d-1)}\right) \sum_{n=0}^{k-1} \frac{s^n}{n!} e^{-s}
\nonumber \\
&=\sum_{n=0}^\infty  \frac{s^n}{n!} e^{-s} \lambda^{(d^n-1)d/(d-1)},
\end{align}
by noting that a job that joins a queue of length $k-1$ has an Erlang-$k$ distributed response time
for exponential job sizes. 
Figure \ref{fig:SQvsLLdist95} depicts the response
time distributions for $\lambda = 0.95$ and $d=2,3$ and $4$. We note that $\bar{F}_R(s)$
decreases as a function of $d$ and $\bar{F}^{(SQ)}_R(s)$ dominates
$\bar{F}^{(LL)}_R(s)$ for all $s > 0$. The next theorem proves an even stronger result.

\begin{figure}[t]
\center
\includegraphics[width=0.95\linewidth]{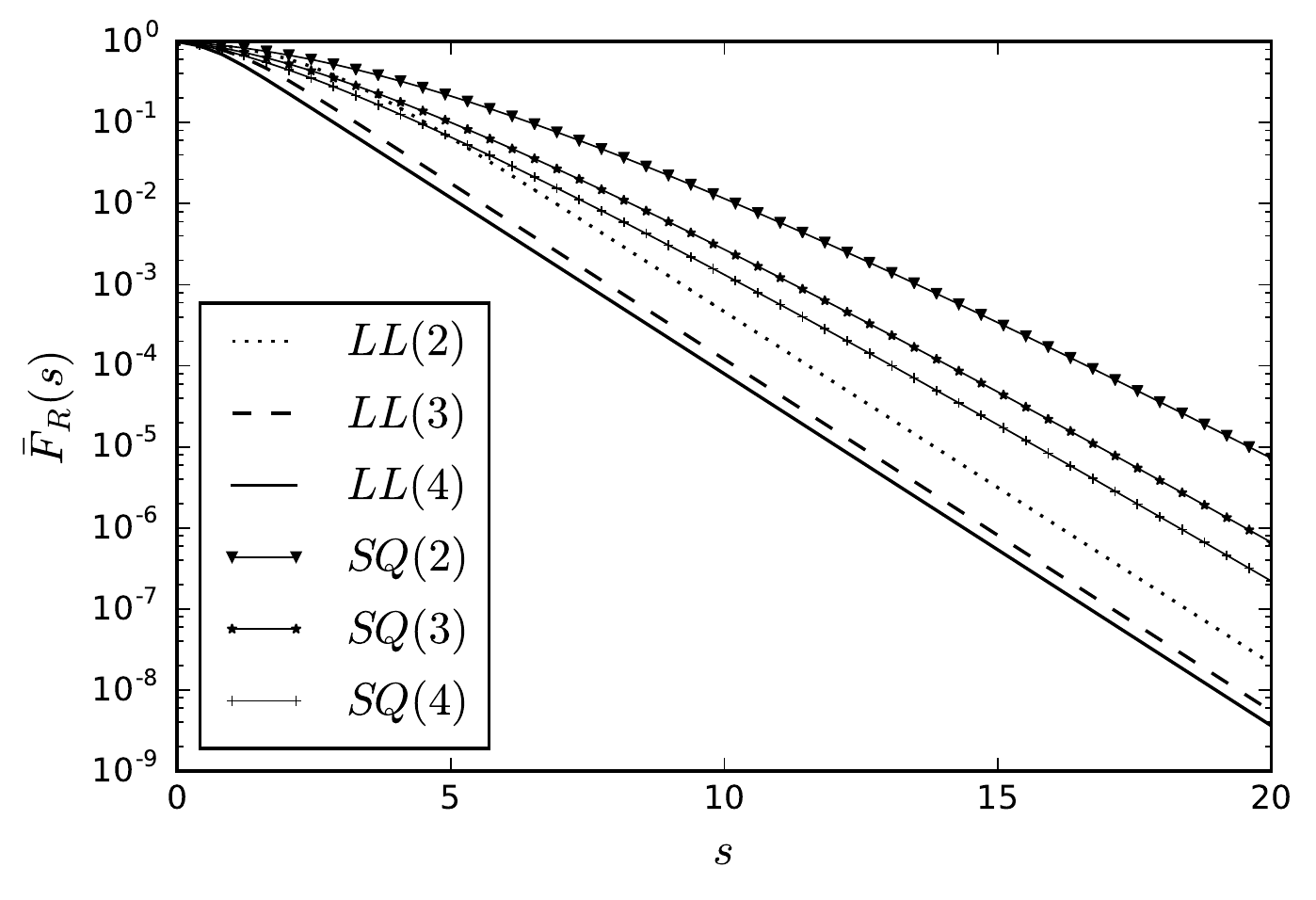}
\caption{Limiting response time distribution of SQ(d) and LL(d) for exponential job
sizes with mean $1$, FCFS service and $\lambda = 0.95$.}
\label{fig:SQvsLLdist95}
\end{figure}

\begin{theorem}
The function $f(s) = \bar{F}_R^{(SQ)}(s)/\bar{F}_R^{(LL)}(s)$ is non-decreasing on $[0,\infty)$, thus
$\bar{F}_R^{(SQ)}(s) \geq \bar{F}_R^{(LL)}(s)$ for all $s$.
\end{theorem}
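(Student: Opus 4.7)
The plan is to recast the monotonicity of $f$ as a hazard-rate comparison and then conclude by a standard ODE comparison against the logistic ODE that is solved exactly by $r^{(LL)}$. Denoting $r^{(\cdot)}(s) = -(\bar{F}_R^{(\cdot)})'(s)/\bar{F}_R^{(\cdot)}(s)$, we have $(\log f)'(s) = r^{(LL)}(s) - r^{(SQ)}(s)$, so $f$ is non-decreasing iff $r^{(SQ)}(s) \le r^{(LL)}(s)$ on $[0,\infty)$. A direct differentiation of \eqref{eq:FR} yields
\[
r^{(LL)}(s) = \frac{(1-\lambda^d)\,e^{(d-1)s}}{\lambda^d+(1-\lambda^d)\,e^{(d-1)s}},
\]
and an elementary check shows that $r^{(LL)}$ solves the logistic ODE $y'=(d-1)\,y(1-y)$ with $y(0)=1-\lambda^d$. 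Writing $A(s) := e^s\bar{F}_R^{(SQ)}(s) = \sum_{n\ge 0} q_n\, s^n/n!$ with $q_n = \lambda^{d(d^n-1)/(d-1)}$ from \eqref{eq:FRSQd}, we obtain $r^{(SQ)}(s) = 1 - A'(s)/A(s)$, and in particular $r^{(SQ)}(0) = 1-\lambda^d = r^{(LL)}(0)$.

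With matching initial conditions and the Lipschitz right-hand side $g(y) = (d-1)\,y(1-y)$ on $[0,1]$, the classical ODE comparison theorem reduces the claim to proving the differential inequality
\[
(r^{(SQ)})'(s) \;\le\; (d-1)\,r^{(SQ)}(s)\bigl(1 - r^{(SQ)}(s)\bigr),\qquad s \ge 0.
\]
Substituting $r^{(SQ)} = 1 - A'/A$ and clearing denominators, this is equivalent to the analytic inequality $A(s)\bigl[A''(s)+(d-1)A'(s)\bigr] \ge d\,[A'(s)]^2$. Expanding both sides as Cauchy products of the power series defining $A$, $A'$, $A''$ and reading off the coefficient of $s^k/k!$, this reduces to showing that for every $k \ge 0$
\[
I_k \;:=\; \sum_{n=0}^k \binom{k}{n}\bigl[q_n\, q_{k-n+2} + (d-1)\,q_n\, q_{k-n+1} - d\,q_{n+1}\,q_{k-n+1}\bigr] \;\ge\; 0.
\]
The base case $k=0$ collapses (after dividing by $q_1=\lambda^d$) to the scalar inequality $y^d - dy + (d-1) \ge 0$ on $(0,1]$, which is immediate from $\phi(y)=y^d-dy+(d-1)$ satisfying $\phi(1)=0$ and $\phi'(y) = d(y^{d-1}-1) < 0$.

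For $k \ge 1$ I would prove $I_k \ge 0$ by pairing summands with indices $n$ and $k-n$ via the symmetry $\binom{k}{n}=\binom{k}{k-n}$, and combining the log-concavity of $(q_n)$ — which holds because $q_{n+1}/q_n = \lambda^{d^{n+1}}$ is strictly decreasing in $n$ — with the explicit doubly-exponential form $q_n = x^{(d^n-1)/(d-1)}$, $x=\lambda^d$. The main obstacle is precisely this coefficient inequality: individual summands $\binom{k}{n}T_{n,k}$ can be of either sign (e.g.\ the $n=0$ term is negative whenever $k\ge 1$ and $\lambda$ is close to $1$), so elementary term-by-term log-concavity bounds are insufficient. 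Within a paired sum, log-concavity applied with indices $(j_1,j_2)=(n,k-n+1)$ bounds $q_n q_{k-n+2}$ above by $q_{n+1}q_{k-n+1}$, while log-concavity applied with $(j_1,j_2)=(n+1,k-n)$ bounds $q_{k-n}q_{n+2}$ below by $q_{n+1}q_{k-n+1}$, so the cross-term $2q_{n+1}q_{k-n+1}$ is approached from opposite sides; the proof must balance these two effects and use that the ratios $q_{n+1}/q_n = \lambda^{d^{n+1}}$ decay doubly-exponentially in $n$, which is the crucial quantitative ingredient.
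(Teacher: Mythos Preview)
Your hazard-rate reformulation and the reduction to $A\bigl[A''+(d-1)A'\bigr]\ge d\,(A')^2$ are correct, but the proof is incomplete: you never establish the coefficient inequality $I_k\ge 0$ for $k\ge 1$. You yourself call this ``the main obstacle,'' and the discussion of pairing and log-concavity is a sketch of a strategy, not an argument; in particular you note that individual summands can have either sign and that log-concavity bounds the relevant products from opposite sides, without showing how these effects actually combine to give $I_k\ge 0$. As written this is the whole difficulty of the theorem, left open.

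The source of the trouble is structural. By passing through the ODE comparison you replace the linear inequality $r^{(SQ)}\le r^{(LL)}$ by a differential inequality quadratic in $A$, so after clearing denominators both sides are Cauchy products of two power series; the resulting $I_k$ is a sum of $\binom{k}{n}$-weighted products $q_n q_{k-n+2}$, $q_n q_{k-n+1}$, $q_{n+1}q_{k-n+1}$, and controlling such a bilinear expression with the doubly-exponential $q_n$ is genuinely awkward. The paper avoids this detour entirely: it cross-multiplies the hazard-rate inequality directly, obtaining the \emph{linear} inequality
\[
e^{(d-1)s}\,A'(s)\;\ge\;\frac{\mu}{1-\mu}\bigl(A(s)-A'(s)\bigr),\qquad \mu=\lambda^d,
\]
in which only the left side is a product, and the factor $e^{(d-1)s}$ has the trivial coefficients $(d-1)^j$. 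Comparing coefficients then pits a single term $\frac{\mu}{1-\mu}q_n(1-\mu^{d^n})$ against a binomial-weighted sum $\sum_{j=0}^n\binom{n}{j}(d-1)^{n-j}q_{j+1}$; the geometric bound $(1-\mu^{d^n})/(1-\mu)\le d^n$ reduces this to a one-variable inequality in $\mu$ that is an equality at $\mu=1$ and can be finished by a short convexity argument. The moral is that the ODE-comparison step, while elegant, is the wrong simplification here: going straight from $r^{(SQ)}\le r^{(LL)}$ to a cross-multiplied series inequality keeps the problem linear in $A$ and makes the coefficient comparison tractable.
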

\begin{proof}
It suffices to show that $f'(s) \geq 0$ for $s>0$ (as $\bar{F}_R^{(SQ)}(0) =\bar{F}_R^{(LL)}(0) = 1$).
Denote $\mu = \lambda^d$. Using \eqref{eq:FRSQd} and \eqref{eq:FR}, the condition $f'(s) \geq 0$ can be restated as
\begin{align*}
\frac{\sum_{k=0}^\infty \mu^{\frac{d^k-1}{d-1}} (\mu^{d^k}-1)\frac{s^k}{k!}}{\sum_{k=0}^\infty \mu^{\frac{d^k-1}{d-1}}\frac{s^k}{k!}} + \frac{(1-\mu) e^{(d-1) s}}{\mu + (1-\mu) e^{(d-1)s}} \geq 0.
\end{align*}
By rearranging terms this is equivalent to showing:
$$
e^{(d-1)s} \left( \sum_{k=0}^\infty \mu^{d^k} \mu^{{\frac{d^k-1}{d-1}}} \frac{s^k}{k!} \right) \geq \frac{\mu}{1-\mu} \sum_{k=0}^\infty \mu^{\frac{d^k-1}{d-1}} (1-\mu^{d^k}) \frac{s^k}{k!}.
$$
For the left hand side we find, by using the Taylor expansion of $e^{(d-1)s}$ and applying Merten's theorem:
\begin{align*}
e^{(d-1)s} \left( \sum_{k=0}^\infty \mu^{d^k} \mu^{{\frac{d^k-1}{d-1}}} \frac{s^k}{k!} \right)
&= \sum_{n=0}^\infty \frac{s^n}{n!} \sum_{k=0}^n \binom{n}{k} (d-1)^{n-k} \mu^{d^k} \mu^{\frac{d^k-1}{d-1}}.
\end{align*}
It therefore suffices to show that the inequality holds for all coefficients of $\frac{s^n}{n!}$, i.e.~it remains to show that:
$$
\frac{\mu}{1-\mu} \mu^{\frac{d^n-1}{d-1}} (1-\mu^{d^n}) \leq \sum_{k=0}^n \binom{n}{k} (d-1)^{n-k} \mu^{d^k} \mu^{\frac{d^k-1}{d-1}}.
$$
By noting that $\frac{1-\mu^{d^n}}{1-\mu} \leq d^n$, the result follows if the following holds
$$
d^n \leq \sum_{k=0}^n \binom{n}{k} (d-1)^{n-k} \mu^{\frac{d^{k+1}-1}{d-1} - \frac{d^{n}-1}{d-1} - 1},
$$
We clearly have an equality in $\mu = 1$ (and for $n=0$). It therefore suffices to show that the right hand side decreases for $\mu \in [0,1]$
for $n > 0$. The first $n$ terms are all 
convex decreasing, while the last term is convex increasing. The derivative of the sum of the first and last term in $\mu = 1$ is $(d^n-1)(1-(d-1)^{n-1}) \leq 0$.
Since the derivative of a convex function on $[0,1]$ is maximized in $1$, the sum of the first and last term is decreasing and 
we may conclude that $f'(s) \geq 0$.
\end{proof}
\subsection{Impact of job variability}\label{sec:SCV}
In this subsection we study the impact of the job size variability on the ratio of the SQ(d) and LL(d) 
mean of the limiting response time distribution. In real systems a significant part of the total workload is often offered by a small
fraction of long jobs, while the remaining workload consists mostly of (very) short jobs \cite{Sparrow}. 
For simplicity we represent these workloads as a hyperexponential (HEXP) distribution (with $2$ phases) 
such that we can vary the job size variability in a systematic manner.  
More precisely, with probability $p$ a job is a type-$1$ job and has an exponential length with parameter $\mu_1 > 1$ 
and with the remaining probability $1-p$ a job is a type-$2$ job and
has exponential length with parameter $\mu_2 < 1$. Hence, the type-$2$ jobs are longer on average
and we therefore sometimes refer to the type-$2$ jobs as the {\it long} jobs. 
The parameters $p, \mu_1$ and $\mu_2$ are set such that the following three values are matched: 
(i) mean job length (set to one), (ii) the squared coefficient of variation (SCV) and (iii) a shape parameter $f$,
using the following equations:
\begin{align*}\label{eq:HyperExp}
 \mu_{1} &= \frac{SCV+(4f-1)+\sqrt{(SCV-1)(SCV-1+8f\bar f)}}{2f(SCV+1)},\\
 \mu_{2} &= \frac{SCV+(4\bar f-1)-\sqrt{(SCV-1)(SCV-1+8f\bar f)}}{2\bar f(SCV+1)}, 
\end{align*}
with $\bar f = 1-f$  and $p = \mu_1 f$. 
The shape parameter $f \in (0,1)$ represents the fraction of the workload that is offered by the type-$1$ jobs.

\begin{figure}[t]
\center
\includegraphics[width=0.9\linewidth]{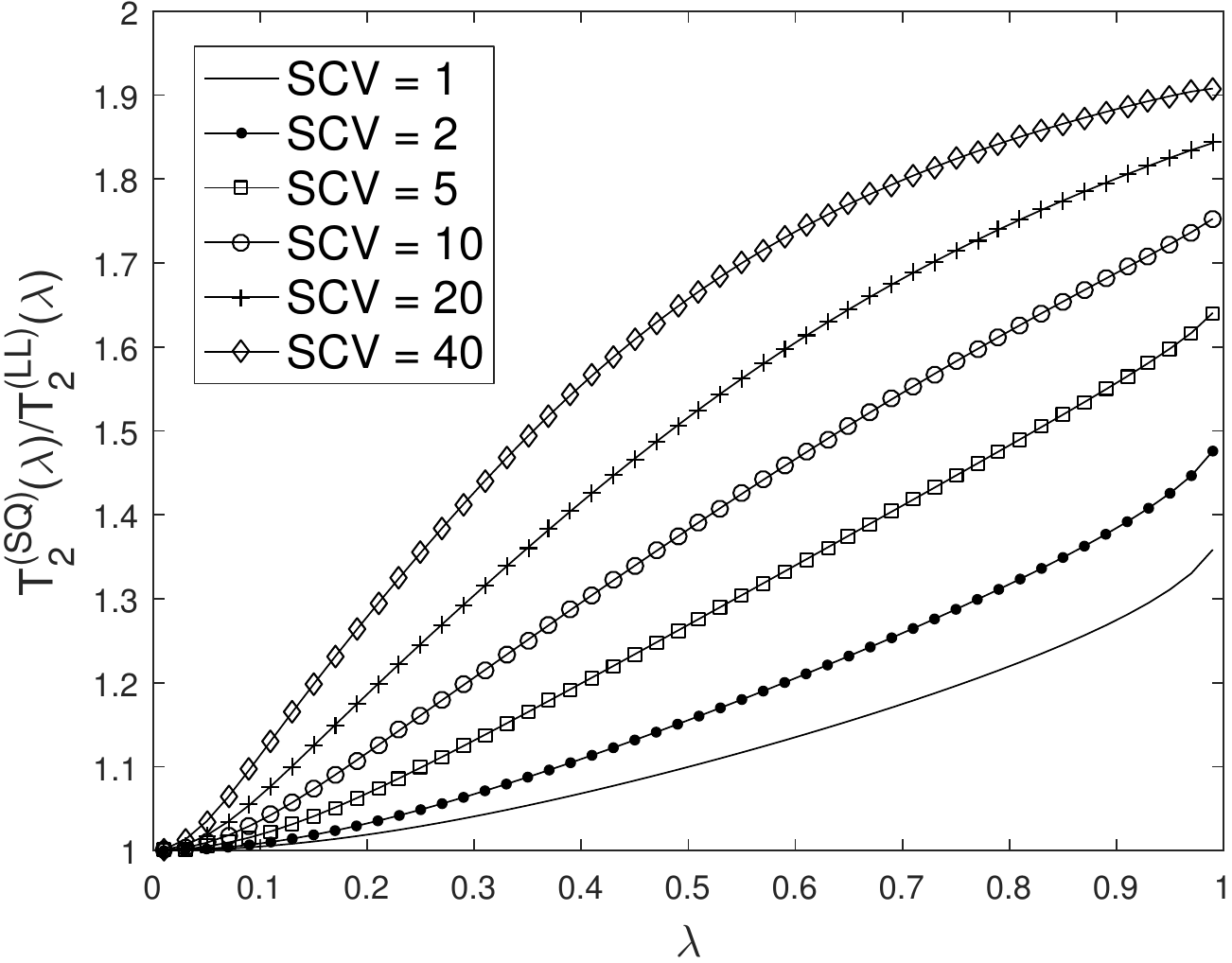}
\caption{Ratio of the mean of the limiting response time distribution of SQ(2) and LL(2) for hyperexponential
job sizes with mean $1$, shape parameter $f=1/2$ and FCFS service as a function of $\lambda$.}
\label{fig:LLDvsSQDhighSCVd2}
\end{figure}

The mean of the limiting response time distribution for the LL(d) policy can be computed in a fraction of a second for any $\rho < 1$
by making use of Theorem \ref{eig:PHD}. For the SQ(d) policy we use a fixed point 
iteration to determine the stationary queue length distribution of the cavity process associated to the
equilibrium environment \cite{bramsonLB}. More specifically, we determine the queue length distribution 
of a sequence of M/G/1 FCFS queues with a queue length dependent arrival rate $\lambda$,
where the queue length distribution determined during the $n$-th iteration determines the
arrival rates of the $n+1$-th iteration, until the queue length distribution converges (starting from the empty distribution). 
While the queue length distribution of such a queue can be computed in a very fast manner when the job
sizes follow a phase-type distribution (or are deterministic), the number of iterations
needed increases sharply as $\rho$ approaches $1$. 
This prevents us from studying what happens in the limit as $\rho$ tends to one.

Figure \ref{fig:LLDvsSQDhighSCVd2} depicts the ratio of the mean of the limiting response time distribution
of the SQ(d) and LL(d) policies when $d=2$ and $f=1/2$ (meaning half of the workload is offered by the {\it long} jobs). 
This ratio increases when the jobs sizes become more variable, which is expected
as having precise workload information should be more valuable when jobs vary significantly in size. 
The results indicate that a mechanism like late binding can offer substantial gains even at fairly low loads
if the job sizes vary significantly (and the round-trip time to fetch the job can be neglected).
The results for $f=1/10$, which implies that $90\%$ of the workload is offered by the {\it long} jobs, 
are very similar (and therefore not depicted). For $d>2$ these ratios tend to increase under sufficiently high loads 
as in the exponential case.

\begin{figure}[t]
\center
\includegraphics[width=0.9\linewidth]{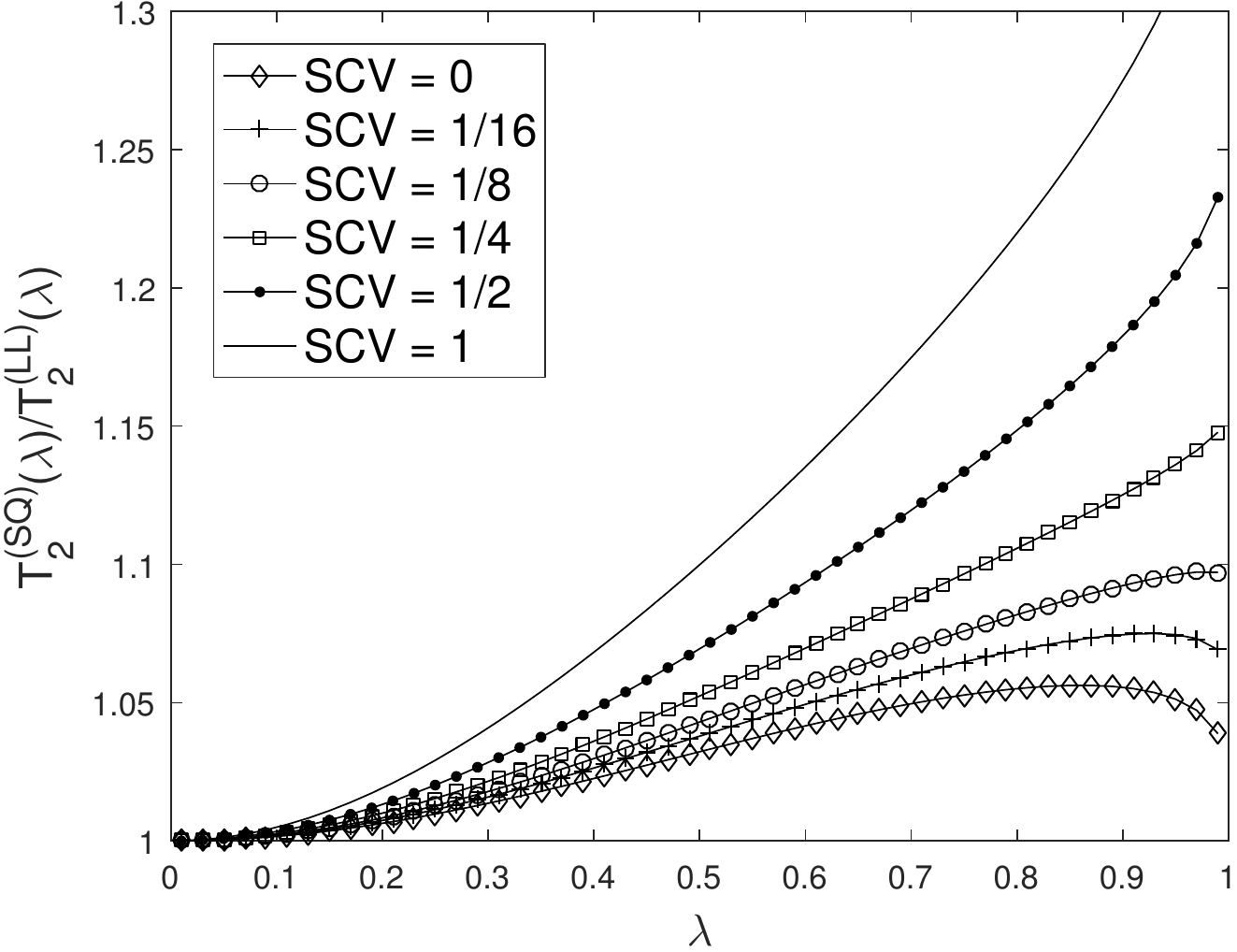}
\caption{Ratio of the mean of the limiting response time distribution of SQ(2) and LL(2) for hyperexponential
job sizes with mean $1$, shape parameter $f=1/2$ and FCFS service as a function of $\lambda$.}
\label{fig:LLDvsSQDlowSCVd2}
\end{figure}

For completeness we also present some results for job sizes with an SCV below $1$ in Figure \ref{fig:LLDvsSQDlowSCVd2}. 
In this case we cannot make use of a hyperexponential distribution and therefore consider Erlang-$k$ distributed 
and deterministic job sizes instead. 
This figure shows that as $\lambda$ approaches $1$ the ratio of the means of the limiting response time distribution
starts to decrease for sufficiently small SCVs. In fact, studying this ratio for $\lambda$ values closer to $1$ as depicted in   Figure \ref{fig:LLDvsSQDlowSCVd2}
suggests that this ratio decreases to $1$ for deterministic job sizes. This seems to make sense intuitively as
for $\lambda$ approaching one, the queue lengths become long and knowing the coarser queue length information
is almost as good as knowing the exact workload.

\subsection{Late binding overhead}\label{sec:overhead}
In the previous subsection we shed light on the margin of improvement that late binding can provide
compared to the classic SQ(d) policy assuming that the jobs can be fetched from the dispatchers in negligible time.
In this section we take the idleness caused by late binding into account.
We do this by comparing the mean of the limiting response time distribution of the SQ(d) policy with the mean of the LL(d) policy, 
where the size of each job under the LL(d) policy is incremented by a deterministic quantity $\tau$ that represents the overhead, that is,
the time that the server remains idle under late binding while fetching the job. We denote the mean of the limiting response time in the latter case 
as $T_{d,\tau}^{(LL)}$ and rely on Theorem \ref{th:X+PH} for its computation. We consider the same job size distributions 
(with average job size equal to one) as in the previous section.

\begin{figure}[t]
\center
\includegraphics[width=0.9\linewidth]{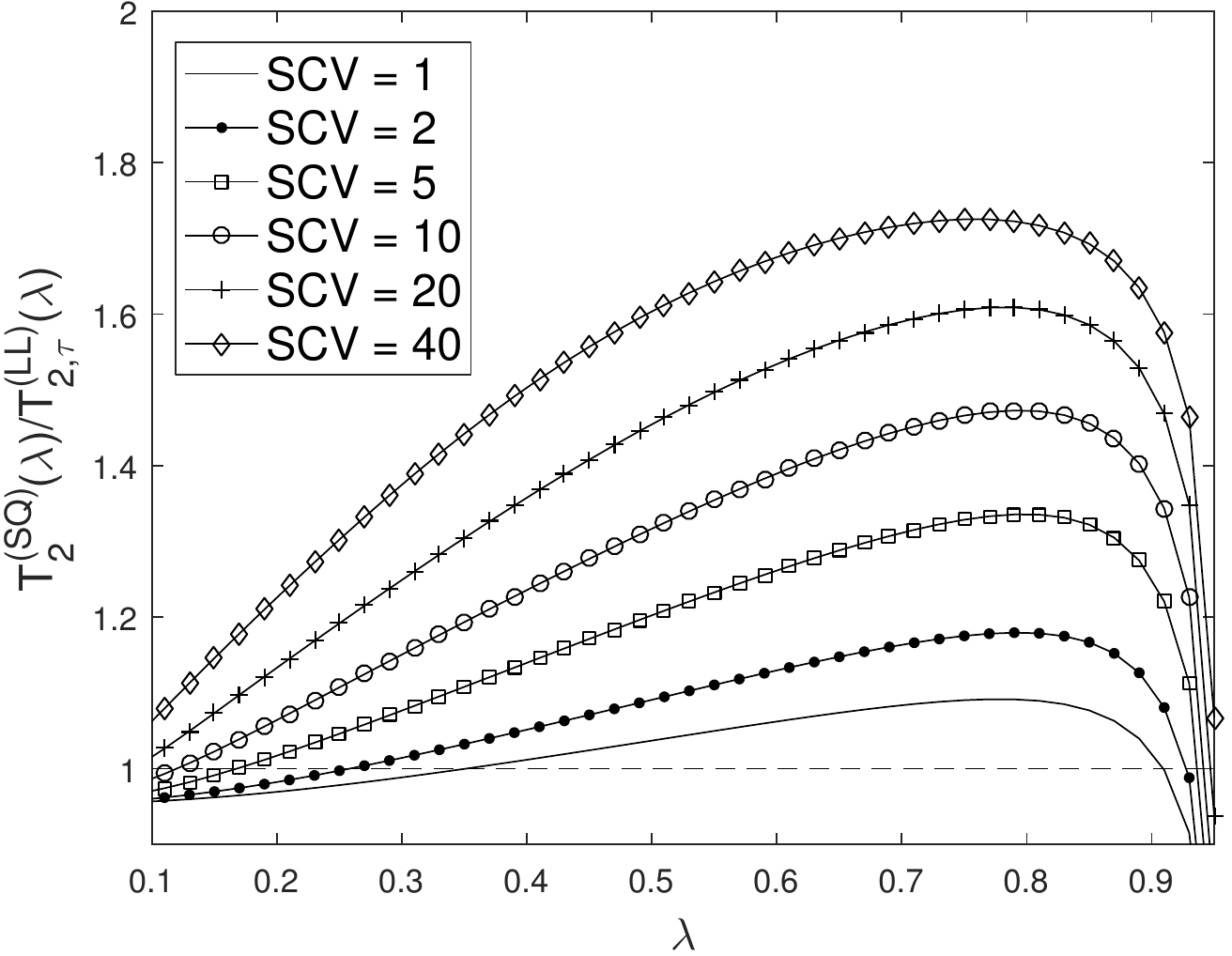}
\caption{Ratio of the mean of the limiting response time distribution of SQ(2) and LL(2)
with $5\%$ overhead (i.e., $\tau = 0.05$) for hyperexponential
job sizes with mean $1$, shape parameter $f=1/2$ and FCFS service as a function of $\lambda$.}
\label{fig:LLDvsSQDhighSCVd2tau005}
\end{figure}

In Figure \ref{fig:LLDvsSQDhighSCVd2tau005} the ratio $T_d^{(SQ)}/T_{d,\tau}^{(LL)}$ is shown
as a function of $\lambda$ for the case where $\tau = 0.05$, meaning each job induces an
idle server period with a length equal to $5\%$ of the mean job size.  It indicates that for a very wide range of
arrival rates $\lambda$, late binding offers substantial gains over the SQ(d) policy even with an overhead of 
$5\%$. For systems with high job size variability, this range even includes arrival rates above $0.9$. Note that the overhead of 
the scheduler implementation in \cite{Sparrow} was estimated to be below $2\%$. 

In fact for medium loads much higher amounts of overhead can be tolerated by the LL(d) policy before it becomes inferior to SQ(d). This
is illustrated in Figure \ref{fig:LLDvsSQD_tau_SCV20}, where we plot the largest $\tau$ value for which $T_{d,\tau}^{(LL)} \leq T_d^{(SQ)}$
when the SCV was set to $20$.
We observe that overheads of $25\%$ and more can be tolerated for system workloads around $50\%$.   

\begin{figure}[t]
\center
\includegraphics[width=0.9\linewidth]{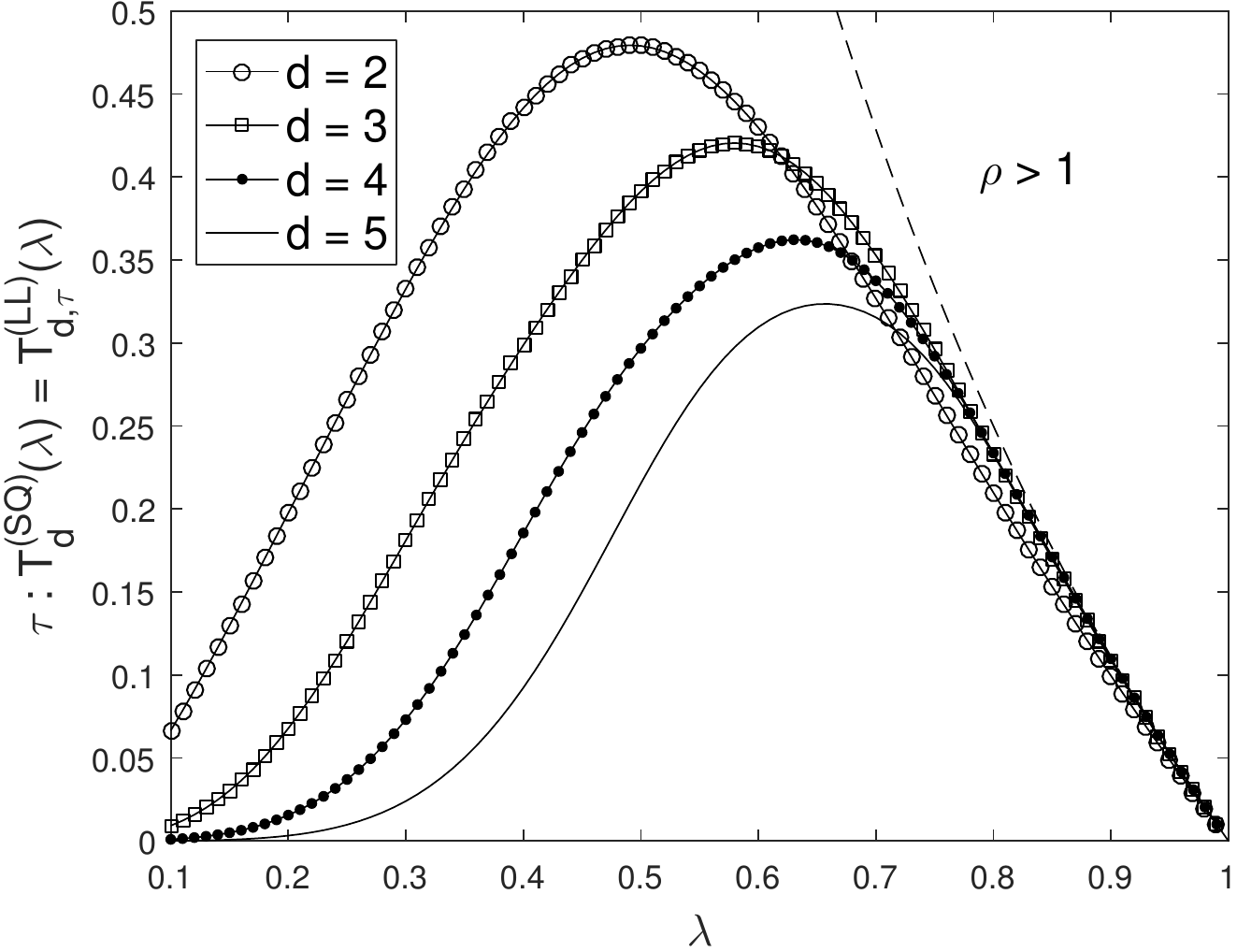}
\caption{Degree of delay that the LL(d) policy can tolerate without being outperformed by SQ(d)
as a function of $\lambda$ for hyperexponential job sizes with mean $1$, SCV = 20 and $f=1/2$, i.e., 
the largest $\tau$ such that $T_{d,\tau}^{(LL)} \leq T_d^{(SQ)}$.}
\label{fig:LLDvsSQD_tau_SCV20}
\end{figure}

\begin{figure*}[t]
\begin{subfigure}{.45\textwidth}
\centering
\includegraphics[width=0.9\textwidth]{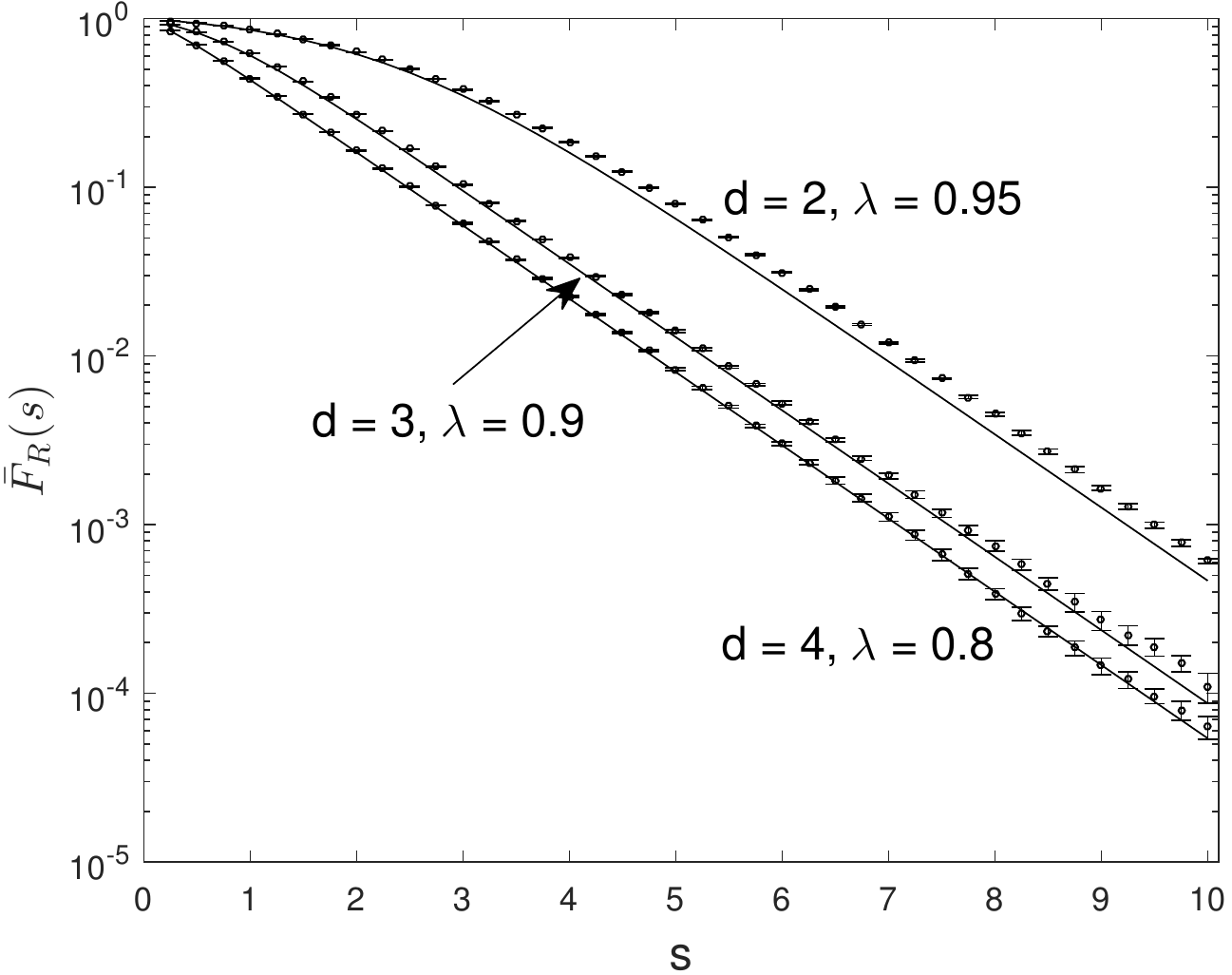}
\caption{Exponential jobs, $N = 100$}
\label{fig:validateEXP}
\end{subfigure}
\begin{subfigure}{.45\textwidth}
\centering
\includegraphics[width=0.9\textwidth]{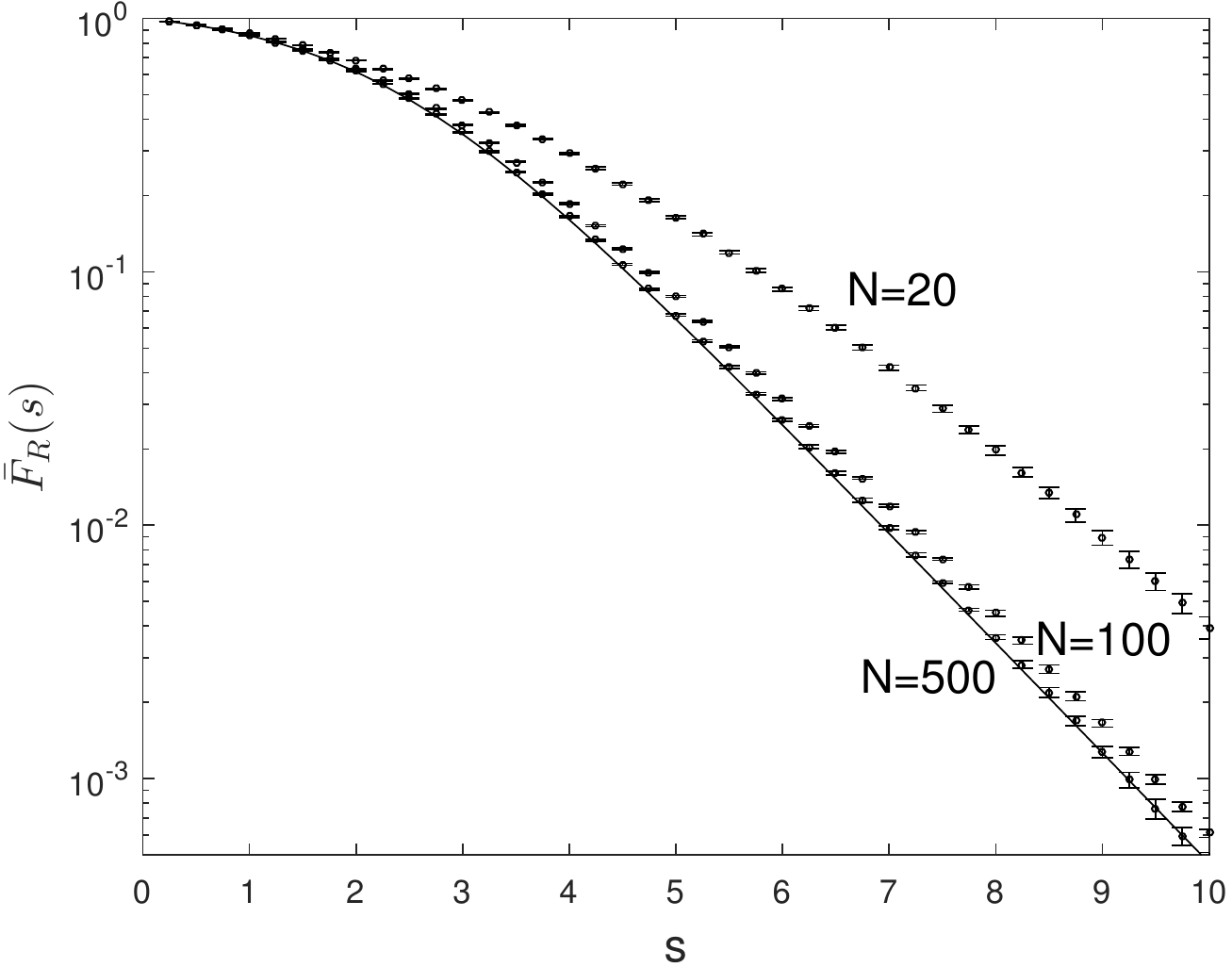}
\caption{Exponential jobs, $d = 2$, $\lambda = 0.95$}
\label{fig:validateEXP2}
\end{subfigure}
\vspace*{4mm}
\begin{subfigure}{.45\textwidth}
\centering
\includegraphics[width=0.9\textwidth]{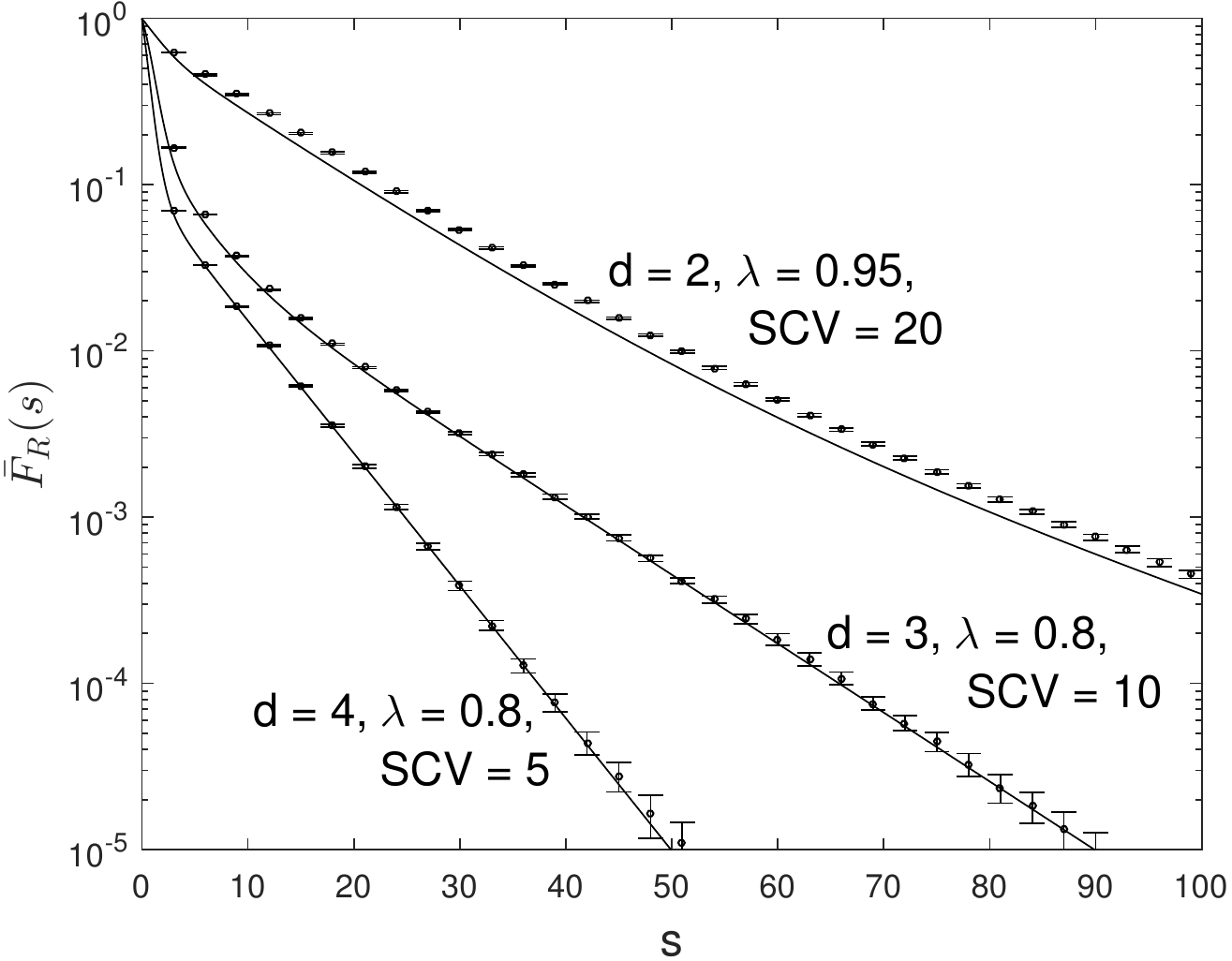}
\caption{Hyperexponential jobs, $N = 100$}
\label{fig:validateHEXP}
\end{subfigure}
\begin{subfigure}{.45\textwidth}
\centering
\includegraphics[width=0.9\textwidth]{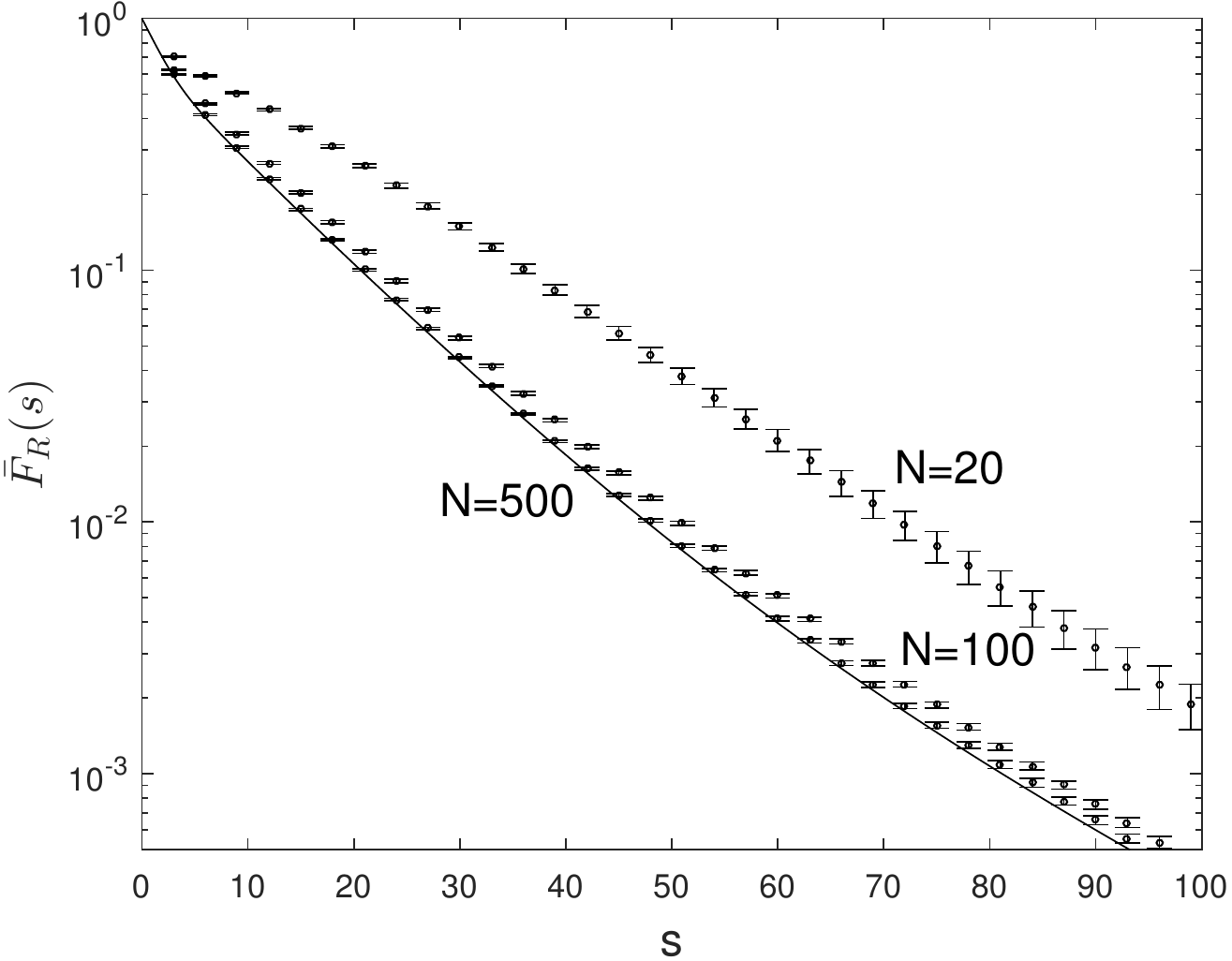}
\caption{Hyperexponential jobs, $d = 2$, $\lambda = 0.95$, $SCV = 20$}
\label{fig:validateHEXP2}
\end{subfigure}
\caption{Limiting response time distribution vs.~simulation 
for $N$ servers with (hyper)exponential job sizes with mean $1$.
The full line represents the limiting response time distribution.}
\end{figure*}

\section{Finite system accuracy}\label{sec:finite}

In this section we briefly compare the limiting response time distribution with 
simulation experiments where the number of servers $N$ is finite. All simulation runs simulate the system
up to time $t=10^7/N$ and use a warm-up period of $30\%$. 

\begin{figure*}[t]
\begin{subfigure}{.45\textwidth}
\centering
\includegraphics[width=0.9\textwidth]{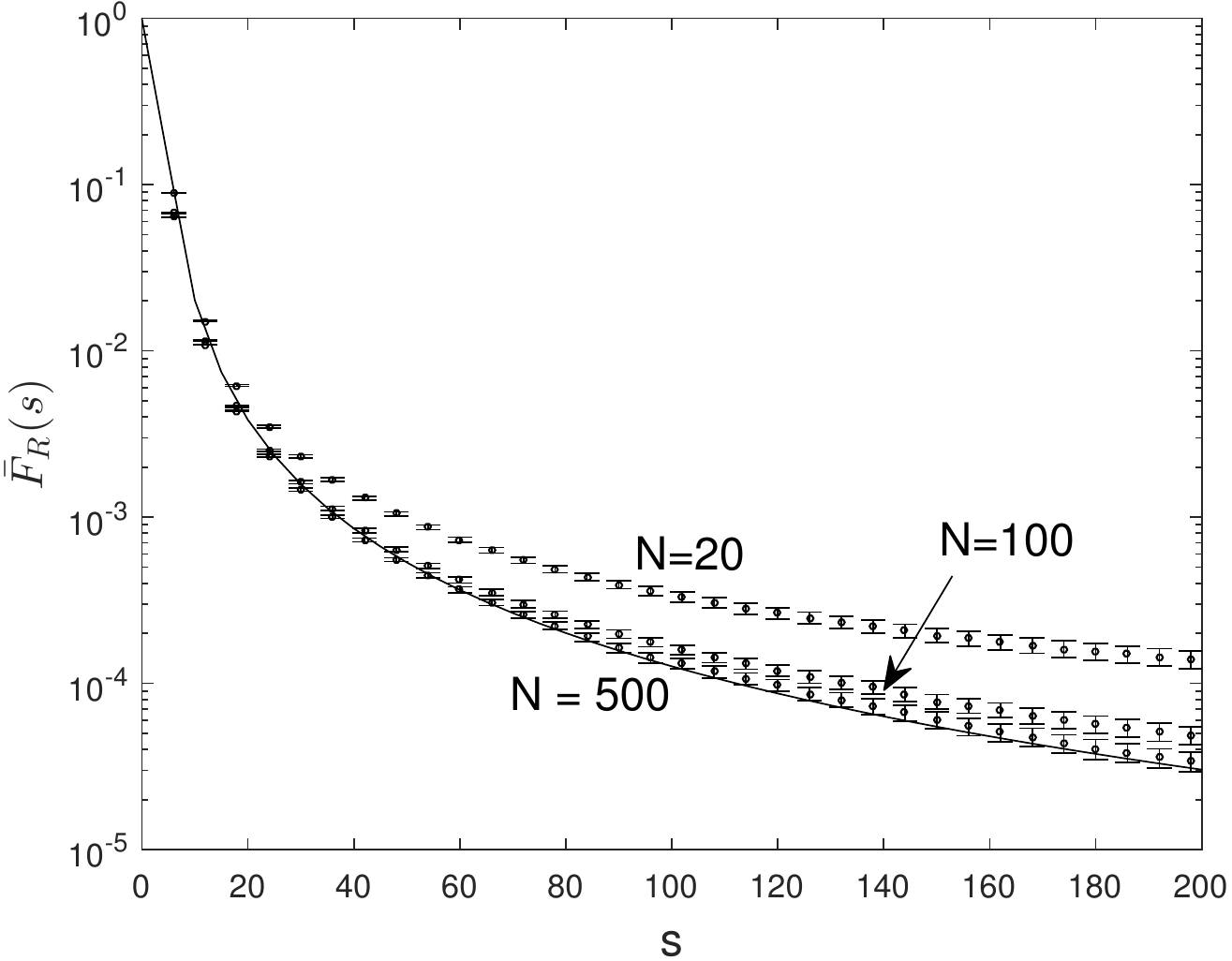}
\caption{Power law job sizes with $\bar{G}(s) = x^{-2}$, $d=2, \rho = 0.8$}
\label{fig:validatePOW}
\end{subfigure}
\begin{subfigure}{.45\textwidth}
\centering
\includegraphics[width=0.9\textwidth]{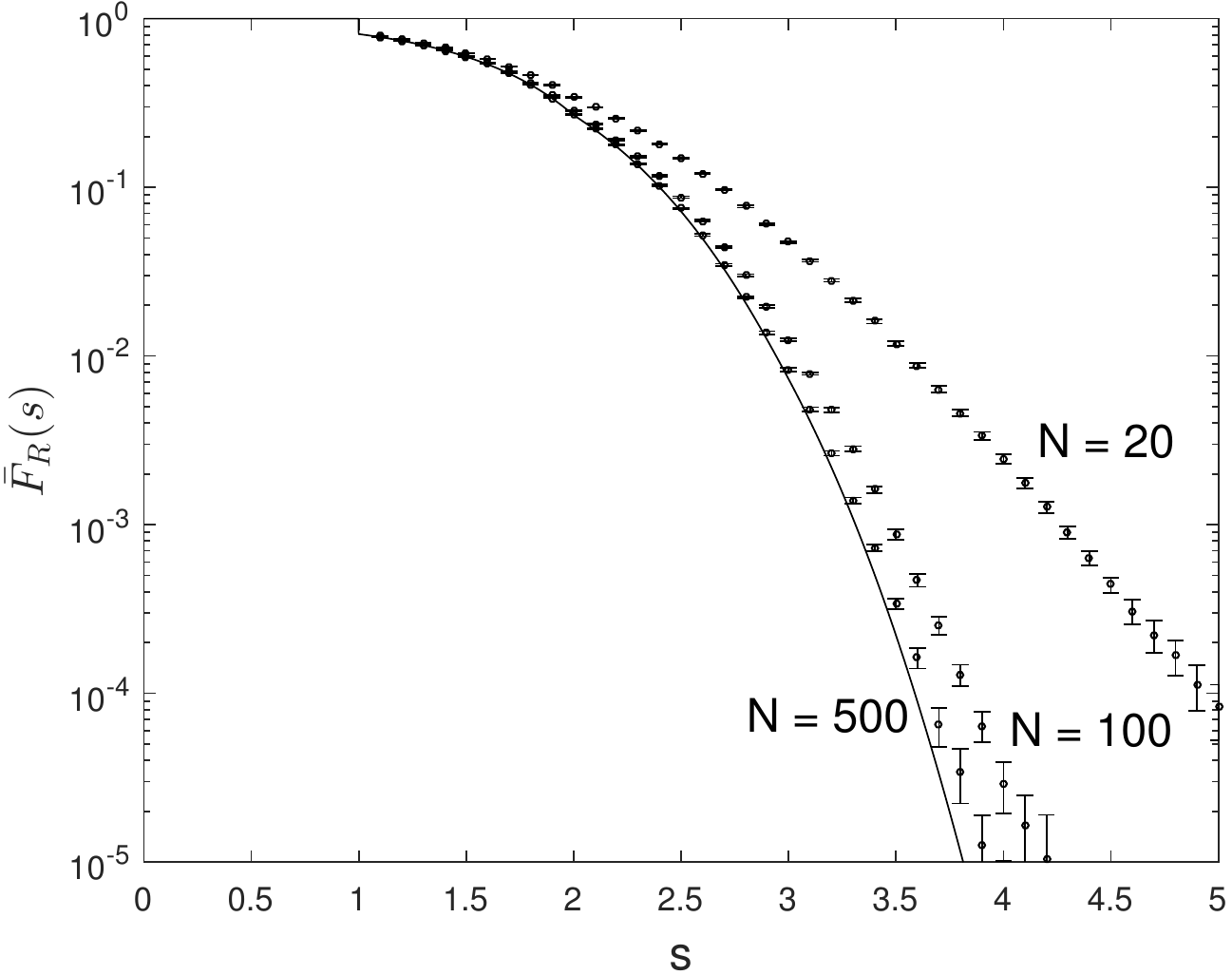}
\caption{Deterministic size one jobs, $d = 2$, $\lambda = 0.9$}
\label{fig:validateDET}
\end{subfigure}
\caption{Limiting response time distribution vs.~simulation 
for $N$ servers with power law and deterministic job sizes with mean $1$.
The full line represents the limiting response time distribution.}
\label{fig:validatePOWDET}
\end{figure*}

Figure \ref{fig:validateEXP} compares the expression for the limiting response time distribution given by
\eqref{eq:FR} for exponential job sizes with simulation experiments. In the simulation the number of servers equals $N=100$
servers, the $95\%$ confidence intervals are computed based on $10$ runs that each start from an empty system. The agreement with simulation is very good
(except for high loads combined with a small $d$) considering that we are simulating a system with only $100$ servers.

In Figure \ref{fig:validateEXP2} we look at the impact of the number of simulated servers $N$ under high loads
when $d=2$. We note that the limiting distribution is not necessarily a good match for the tail probabilities of the response time
when $N$ is small, e.g., $N=20$, but the accuracy quickly improves as the number of servers increases.

In Figure \ref{fig:validateHEXP} and \ref{fig:validateHEXP2} we look at a similar setting as in Figure \ref{fig:validateEXP}, but the job sizes
now follow a hyperexponential distribution with $f=1/2$ (see Section \ref{sec:SCV} for details).  
In this case the $95\%$ confidence intervals are computed based on $25$
runs. We note that even though the job sizes are now
substantially more variable, the accuracy seems quite similar to the exponential case. Thus, more variable 
job size distributions do not necessarily imply worse accuracy for a fixed $N$.

Figure \ref{fig:validatePOWDET} illustrates the accuracy of the limiting
response time distribution in case of power law and deterministic job sizes
(computed via the fixed point iteration in Section \ref{sec:fixed}).
More specifically, for the power law distribution we used $\bar{G}(s) = s^{-\beta}$ with $\beta=2$.
This implies that the mean job size is finite and equal to $2$, while the variance of the job size distribution
is infinite. In the deterministic case the job size equals $1$. 
The figure indicates that somewhat larger $N$ values are needed to closely match 
the limiting response time distribution compared to the (hyper)exponential case.


\section{Conclusions}\label{sec:concl}
In this paper we studied the limiting workload and response time distribution 
of the LL(d) policy which assigns an incoming job to a server with the least work left
among $d$ randomly selected servers. We introduced a fixed point iteration to determine the
limiting workload distribution for general service time distributions and any non-idling
service discipline and studied its convergence. We derived a closed form expression for
both the workload and response time distribution in case of exponential job sizes and indicated
that these distributions can be computed easily by solving a set of ordinary differential equations
for phase-type distributed job sizes. 

We provided insight into the gains that can be expected when exact
workload information is used instead of the coarser queue length information by comparing the performance of the LL(d) policy 
with the classic SQ(d) policy. Such a comparison is 
relevant to understand the performance gains offered by schedulers implementing {\it late binding}. 
In  this regard we demonstrated that 
late binding offers significant gains over SQ(d) for a wide range of arrival rates, even when taking
the late binding overhead into account.

\bibliographystyle{ACM-Reference-Format}
\bibliography{../../../PhD/thesis}

\end{document}